\newtheorem{theorem}{Theorem}
\newtheorem{lemma}[theorem]{Lemma}
\newtheorem{corollary}[theorem]{Corollary}
\newtheorem{definition}[theorem]{Definition}
\newcounter{fig}
\begin{document}

\title{On the Mathematics of\\Data Centre Network Topologies}

\author{Iain A. Stewart\corref{aaa}}%
\ead{i.a.stewart@durham.ac.uk}

\cortext[aaa]{This work was supported by the UK Engineering and Physical Sciences Research Council (EPSRC) grant `Interconnection Networks: Practice unites with Theory (INPUT)' [grant number  EP/K015680/1].}
\cortext[bbb]{A preliminary version of this paper appeared as an extended abstract in the \emph{Proceedings of 20th International Symposium on Fundamentals of Computation Theory\/} (A. Kosowski, I. Walukiewicz, eds.), Gdansk, Poland, August 17-19 2015, Lecture Notes in Computer Science Volume 9210, Springer, 2015, 283-295.}
\address{School of Engineering and Computing Sciences, Durham University,\\Science Labs, South Road, Durham DH1 3LE, U.K.}

\begin{abstract}
The theory of combinatorial designs has recently been used in order to build switch-centric data centre networks incorporating a large number of servers, in comparison with the popular Fat-Tree data centre network. The construction employed, called the $3$-step method, revolves around an appropriately chosen (but relatively small) bipartite graph and a transversal design. In this paper,  we clarify and extend these recent results. In particular, we prove the following path diversity results: in a one-to-one context, we prove that in these data centre networks there are pairwise link-disjoint paths joining all the servers adjacent to some switch with all the servers adjacent to any other switch so that we retain control of the path lengths (these results are optimal in terms of the numbers of paths constructed and we prove that we have a wide choice of bipartite graph and transversal design to which we can apply the $3$-step method); and in a one-to-many context, we prove that there are pairwise link-disjoint paths from all the servers adjacent to some switch to any identically-sized collection of target servers where these target servers need not be adjacent to the same switch (again, we keep control of the path lengths). Our constructions and analysis are undertaken on bipartite graphs with the applications to data centre networks being easily derived. Our results strengthen the overall competitiveness of data centre networks constructed using the $3$-step method, in comparison with Fat-Tree data centre networks, and, more generally, show the potential of results and methodologies from combinatorics to data centre network design.
\end{abstract}

\begin{keyword}data centre networks \sep switch-centric data centre networks \sep Fat-Trees \sep combinatorial designs \sep bipartite graphs \sep path diversity
\end{keyword}

\maketitle

\section{Introduction}

\subsection{The data centre network context}

Data centres are expanding both in terms of their size and their importance as computational platforms for cloud computing, web search, social networking, and so on. There is an increasing demand that data centres incorporate more and more servers but so that overall computational efficiency is not compromised through excessive traffic. A key factor as to the eventual performance of a data centre is the \emph{data centre network\/} (\emph{DCN\/}); that is, the interconnection fabric of the servers and switches within the data centre. As we strive to incorporate more and more servers, new topologies are being developed so as to cope with the increase in scale and best utilize the additional computational power. It is with topological aspects of DCNs that we are concerned in this paper.

The traditional design of a DCN is \emph{switch-centric\/} so that the routing intelligence resides amongst the switches, with the servers behaving only as computational nodes. In switch-centric DCNs, there are no direct server-to-server links; only server-to-switch and switch-to-switch links. Switch-centric DCNs are traditionally tree-like with servers located at the `leaves' of the tree-like structure. Examples include ElasticTree \cite{HSM06}, VL2 \cite{GHJ09}, HyperX \cite{ABD09}, Portland \cite{MPF09}, and Flattened Butterfly \cite{AMW10}, although the dominating switch-centric DCN is Fat-Tree \cite{ALV08}. Whilst it is generally acknowledged that tree-like, switch-centric DCNs have their limitations when it comes to, for example, scalability, due to the size of routing tables at the switches, switch-centric DCNs remain popular and can usually be constructed from commodity hardware. A more recent paradigm, namely the \emph{server-centric\/} DCN, has emerged so that deficiencies of the tree-like, switch-centric DCNs might be ameliorated. Server-centric DCNs reflect that the routing intelligence resides within the servers with switches operating only as dumb crossbars. In server-centric DCNs there are only server-to-switch and server-to-server links. However, server-centric DCNs also suffer from deficiencies such as packet relay overheads caused by the need to route packets within the server; moreover, server-centric DCNs have yet to make it into the commercial mainstream (the reader is referred to \cite{LMV13} for an overview of the state of the art as regards DCN architectural design). It is with the construction of switch-centric DCNs that we are concerned here. 

It is difficult to design computationally efficient (switch-centric) DCNs so as to incorporate large numbers of servers as there are additional considerations to take into account. For example, switches and (especially) servers in data centres have a limited number of ports with a consequence being that the more servers there are, the greater the average or worst-case link-count between two distinct servers; hence, there is a packet latency overhead to be borne. Also, so as to better support routing, fault-tolerance, and load-balancing, we would prefer that there are numerous alternative paths within the DCN joining any two distinct servers; that is, that there is \emph{path diversity\/}. There are many other design parameters to bear in mind relating to, for example, incremental scalability, throughput, cost, oversubscription, energy consumption, latency, and security (see, for example, \cite{WXN12} for an overview). The upshot is that the DCN designer has to simultaneously secure a number of performance characteristics, some of which are competing against each other; this makes the DCN design space difficult to work in. 

\subsection{Using combinatorial designs to build DCNs}

A recent proposal in \cite{QFZ15} advocated the use of \emph{combinatorial design theory\/} in order to design switch-centric DCNs; these DCNs have beneficial properties as regards incorporating more servers and possessing path diversity yet it is possible to limit the worst-case link-length of server-to-server shortest paths (and so, ultimately, achieve better control over packet latency in a DCN). The use of combinatorial designs within the study of general interconnection networks is not new and originated in \cite{BBD95} where the targeted networks involved processors communicating via buses (the reader is referred to \cite{CDS99} for a range of applications of combinatorial design theory within computer science). A hypergraph framework was developed in \cite{BBD95} where the hypergraph nodes represent the processors and the hyperedges the buses. Likewise, an analogous framework was developed in \cite{QFZ15} but where the hypergraph nodes and edges both represent switches so that the pendant servers `hang off' some of the switches (we present a detailed description of this framework in Section~\ref{sec:composition}). In \cite{QFZ15}, the ubiquitous switch-centric Fat-Tree DCN from \cite{ALV08} was used as a yardstick against which to compare the new DCN designs developed in \cite{QFZ15} under the normalization that all DCNs are to have the same worst-case link-length of server-to-server shortest paths, namely $6$, as this equals the worst-case link-length of server-to-server shortest paths in the Fat-Tree DCN. It was shown that more servers can be incorporated within the new DCNs yet, crucially, the resulting DCNs have good path diversity. It is the algebraic properties (relating to symmetry and balance) possessed by transversal designs that enable the constructions and analysis as described in \cite{QFZ15}. One slight difficulty with the original and novel approach taken in \cite{QFZ15} is that some of the path diversity results derived there are incorrect (as we explain later in Section~\ref{sec:difficulties}).

\subsection{Our contribution}

In this paper we return to the framework of \cite{QFZ15} and formulate and prove path diversity results for the switch-centric DCNs constructed using the methods of that paper. As our concern is entirely with topological properties of DCNs, henceforth we abstract our DCNs as undirected graphs where the nodes are to represent servers and switches and the edges point-to-point links. The crux of the construction in \cite{QFZ15} is (essentially) to build a bipartite graph using a systematic method, called the $3$-step method, involving a different `base' bipartite graph and a transversal design, and to convert the resulting bipartite graph into switch-centric DCNs (in a variety of ways). After explaining how hypergraphs and transversal designs can all be considered as bipartite graphs in Section~\ref{sec:basics}, in Section~\ref{sec:3stepconstruction} we provide a detailed description of the 3-step framework from \cite{QFZ15} and explain how the bipartite graphs constructed are converted into switch-centric DCNs. Next, we revisit the results from \cite{QFZ15}. In particular, in Section~\ref{sec:onetoonepathdiv} we correct and extend the analysis in \cite{QFZ15} and affirm that using the $3$-step method from \cite{QFZ15}, we can build switch-centric DCNs: with many more servers than the Fat-Tree DCN yet so that, like the Fat-Tree, every server-to-server shortest path has length at most $6$; and so that (assuming some numeric conditions on the base bipartite graph and the transversal design) we can find pairwise link-disjoint paths from all of the servers adjacent to a particular switch to all of the servers adjacent to any other switch. Moreover, we provide an upper bound on the lengths of the paths constructed in terms of the diameter of the base bipartite graph (see Theorem~\ref{thm:main}). We also deal with a scenario missing from \cite{QFZ15} (see part (\emph{b\/}) of Theorem~\ref{thm:main}). As we explain, the general situation is more subtle than was assumed in \cite{QFZ15}.

The DCN path diversity, as we have described it above, comes about from building bipartite graphs (which are subsequently converted to DCNs) so that given any two distinct nodes, there are numerous node-disjoint paths joining these two nodes; that is, these bipartite graphs have one-to-one path diversity. In Section~\ref{sec:onetomanypathdiv}, we go on to show that we can actually build numerous edge-disjoint paths from a source node to different destination nodes in our bipartite graphs; that is, we have one-to-many path diversity. The DCNs obtained from these bipartite graphs are such that (assuming some numeric conditions on the base bipartite graph and the transversal design) we can find pairwise link-disjoint paths from all of the servers adjacent to some switch to any identically-sized collection of servers (irrespective of which switch they are adjacent to). Consequently, we show that our DCNs provide support for additional communication patterns that are prevalent within data centre networks. It should be noted that one-to-many and many-to-many communication patterns are commonplace in data centres; for example in applications involving MapReduce.

This paper is unashamedly theoretical. However, we demonstrate that not only is there interesting combinatorics within the practical world of DCN design but that combinatorial mathematics can potentially contribute to the DCN design space on a practical level. We feel that the mathematical aspects of DCNs have so far remained almost completely unexamined and we advocate a closer theoretical scrutiny of DCNs both as a model of computation and in relation to the vast swathes of research on general interconnection networks. We mention some practical considerations and directions for further research in the Conclusion.

\section{Basic Concepts}\label{sec:basics}

We begin by briefly reviewing some architectural aspects of switch-centric DCNs that are pertinent to our subsequent research. We then move on to the discrete structures featuring in \cite{BBD95,QFZ15}, namely hypergraphs, bipartite graphs, and transversal designs. So that we might fully describe and understand the constructions in \cite{BBD95,QFZ15}, as well as our own upcoming analysis of switch-centric DCNs, we eventually amalgamate hypergraphs, bipartite graphs, and transversal designs so that by the end of this section, we will have developed an encompassing bipartite graph framework for the design of switch-centric DCNs. General graph-theoretic concepts can be obtained in \cite{Die10}. 

\subsection{Switch-centric DCNs}

A switch-centric DCN is abstracted as a graph (which we also refer to as a DCN) where the nodes are partitioned into two sets: there are \emph{server-nodes\/}; and there are \emph{switch-nodes\/}. Of course, the server-nodes correspond to servers in the DCN and the switch-nodes to switches; note that immediately there are practical design limitations imposed by the number of \emph{ports\/} in a real switch and the number of \emph{NIC ports\/} in a real server (we sometimes talk of the number of ports of a switch-node rather than its degree). Furthermore, in switch-centric DCNs there are no links joining one server-node directly to another server-node (because all routing within a switch-centric DCN falls within the purview of the switches). Of concern to us in this paper will be incorporating a comparatively large number of server-nodes within our DCNs but so that the maximum length of a shortest path joining any two server-nodes, that is, the \emph{diameter\/} of the DCN, is kept within a given bound, where the \emph{length\/} of such a path is the number of distinct links on the path. Essentially, we will be comparing DCNs as to how many server-nodes they incorporate but when their diameters are normalized.

However, DCNs must also possess other properties to make them usable within a data centre context. For example, they also need to: be scalable and incrementally scalable (that is, have the capacity to cope with increases in components and data); have low message latency; provide for high overall throughput (under a range of traffic patterns); be able to tolerate (a limited number of) faults; be energy efficient; be both economically and physically viable; and support virtualization (that is, the partitioning of the DCN into virtual networks on a dynamic basis), amongst many other things. Support for some of these properties can be measured using graph theory; for example, the diameter of the DCN gives guidance as regards the expected message latency. Of particular interest to us will be \emph{path diversity\/} which we define (somewhat informally) as the capacity to send data without inducing additional congestion or so as to cope with existing congestion or faults. There are two contexts of interest to us: the \emph{one-to-one\/} (or \emph{unicast\/}) context, when a source server-node wishes to send data to a destination server-node by the utilization of independent paths (we will return to what we mean by `independent' soon); and the \emph{one-to-many\/} (or \emph{multicast\/}) context, when a source server-node wishes to send data to a number of destination server-nodes so that the different transmissions do not induce congestion. Path diversity is highly relevant to a number of the above properties such as latency and scalability, where different paths are used to split and balance loads, and fault tolerance, where different paths provide alternative means of transit in the case of faults. That this is the case in the one-to-one context is obvious; however, the need for data centres to support data replication and applications like MapReduce \cite{DG04} makes path diversity crucial in a one-to-many context too. As we shall soon see, just as with latency, the independence of paths can be considered graph-theoretically.

\subsection{Hypergraphs}\label{sec:hypergraphs}

Hypergraphs provide the original framework for the \emph{$3$-step construction\/} (to be defined later) as employed in \cite{BBD95,QFZ15}: in \cite{BBD95}, hypergraphs were used to model bus interconnection networks; and in \cite{QFZ15}, hypergraphs were used to model data centre networks. For the moment, and in order to appreciate the context of \cite{BBD95,QFZ15}, we retain this hypergraph framework before we phrase all content in this introduction within an encompassing bipartite graph framework.

A \emph{hypergraph\/} $H = (V,E)$ consists of a finite set $V$ of \emph{nodes\/} together with a finite set $E$ of \emph{hyperedges\/} where each hyperedge is a non-empty set of nodes and each node appears in at least one hyperedge. The \emph{degree\/} of a node is the number of hyperedges containing it and the \emph{rank\/} of a hyperedge is its size as a subset of $V$. A hypergraph is \emph{regular\/} (resp. \emph{uniform\/}) if every node has the same degree (resp. every hyperedge has the same rank) with this degree (resp. rank) being the \emph{degree\/} (resp. \emph{rank\/}) of the hypergraph. Every graph $G=(V,E)$ has a natural representation as a hypergraph: the nodes of the hypergraph are $V$; and the hyperedges are $E$, where the hyperedge $e$ consists of the pair of nodes incident with the edge $e$ of $G$.

\subsection{Hypergraphs and bipartite graphs}

We can represent any hypergraph $H=(V,E)$ as a bipartite graph: the node set of the bipartite graph is $V\cup E$; and there is an edge $(v,e)$, for $v\in V$ and $e\in E$, in the bipartite graph if, and only if, $v\in e$ in the hypergraph. It is clear that this yields a one-to-one correspondence between hypergraphs and bipartite graphs (without isolated nodes) that come complete with a partition of the elements into a `left-hand side', which will correspond to the nodes of the hypergraph, and a `right-hand side', which will correspond to the hyperedges of the hypergraph. We assume (henceforth) that every bipartite graph comes equipped with such a partition and for clarity from now on we refer to the nodes on the left-hand side as \emph{nodes\/} and the nodes on the right-hand side as \emph{blocks\/} (this is in keeping with our upcoming realisation of transversal designs as bipartite graphs). Likewise, we refer to the degree of a node as its \emph{degree\/} and the degree of a block as its \emph{rank\/}. A bipartite graph corresponding to a regular, uniform hypergraph of degree $d$ and rank $\Delta$ is called a \emph{$(d,\Delta)$-bipartite graph\/}. Every bipartite graph (and so every hypergraph) also describes its \emph{dual bipartite graph\/} (or alternatively its dual hypergraph) where the roles of the nodes on the left-hand side and the blocks on the right-hand side of the partition are reversed in the definition of the bipartite graph; so, for example, the dual bipartite graph of a $(d,\Delta)$-bipartite graph is regular of degree $\Delta$ and uniform of rank $d$. 

Note that if $G$ is a bipartite graph then it corresponds to a hypergraph via our representation above and it also corresponds to a hypergraph via the natural representation highlighted in Section~\ref{sec:hypergraphs}. The two hypergraphs corresponding to the same bipartite graph are different and we are never interested in the representation of a bipartite graph as a hypergraph via the natural representation of Section~\ref{sec:hypergraphs}.

\subsection{Paths in hypergraphs}

A \emph{path\/} in some hypergraph $H=(V,E)$ (or the corresponding bipartite graph) is an alternating sequence of nodes and hyperedges so that all nodes are distinct, all hyperedges are distinct, and a node $v\in V$ follows or preceeds a hyperedge $e\in E$ in the sequence only if $v\in e$ in the hypergraph (or $(v,e)$ is an edge in the corresponding bipartite graph). The first element of some path is the \emph{source\/} and the final element the \emph{destination\/}. The \emph{length\/} of any path is its length in the bipartite graph corresponding to the hypergraph, and the \emph{distance\/} between two distinct elements of $V\cup E$ is the length of a shortest path joining these two elements in the corresponding bipartite graph. The \emph{diameter\/} of $H$ is the maximum of the distances between every pair of distinct nodes of $V$, and the \emph{line-diameter\/} of $H$ is the maximum of the distances between every pair of distinct hyperedges of $E$.

We have two remarks. First, we have traditional notions of diameter and line-diameter in any bipartite graph. Note that our notion of diameter in a bipartite graph, which is the longest shortest node-to-node path (and so ignores node-to-block and block-to-block paths), is different from the usual graph-theoretic notion of diameter in a bipartite graph (the same comment can be made as regards line-diameter). When we talk of the diameter or line-diameter of a bipartite graph, we mean with respect to \emph{our\/} notion of diameter or line-diameter, respectively; if we need to talk of the traditional notion of graph diameter then we will make this clear. Second, our notion of path length in a hypergraph  differs from that in \cite{QFZ15} where the length is the number of nodes (resp. hyperedges) in a hyperedge-to-hyperedge (resp. node-to-node) path. There is no real consequence to this difference; essentially, our notion of path length is double that in \cite{QFZ15}. However, we shall soon move to an exclusively bipartite graph-theoretic formulation in which our notion of length is the natural one to adopt.

We shall be interested in building sets of paths in some hypergraph $H$ so that the paths might have the same sources or destinations; moreover, we shall require that these paths do not `interfere' with one another (or are `independent' as we mentioned earlier). We say that a set of paths in $H$ is:
\begin{itemize}
\item \emph{pairwise internally-disjoint\/} if any source or destination only appears as a source or destination, and any node or hyperedge that is not a source or destination appears on at most one path
\item \emph{pairwise edge-disjoint\/} if every pair $(v,e)\in V\times E$ is such that $v$ follows or precedes $e$ on some path at most once across all paths from this set.
\end{itemize}

\subsection{Hypergraphs as switch-centric DCNs}\label{sec:hyperDCNs}

Given some hypergraph $H=(V,E)$, our intention is to ultimately transform this hypergraph into a DCN by considering both the nodes and the hyperedges as switch-nodes so that the switch-nodes corresponding to the nodes (which we shall later call the level-$1$ switch-nodes, with the switch-nodes corresponding to the hyperedges the level $2$-switch-nodes) also have adjacent server-nodes, which we have yet to define (this intention is best appreciated by working with the corresponding bipartite graph rather than the hypergraph; the upcoming Fig.~\ref{MethodAce1} provides a visualization of what we mean). Consequently, we can regard a hypergraph $H$ as modelling a switch-centric DCN $N$ where there are two levels of switch-nodes. 

Suppose that we have a set of pairwise internally-disjoint paths from one node of $H$ to another node of $H$. This translates to a set of pairwise internally-disjoint paths in $N$ from a level-$1$ switch-node to another level-$1$ switch-node. We can use these paths for the simultaneous transfer of data from  server-nodes adjacent to the source level-$1$ switch-node to server-nodes adjacent to the destination level-$1$ switch-node. In order to facilitate this data transfer we require that level-$1$ switch-nodes are non-blocking whereas the level-$2$ switch-nodes can be blocking; recall that a switch-node is \emph{non-blocking\/} when no contention arises when simultaneously sending data through the switch-node on two distinct input links and out on two distinct output links, and \emph{blocking\/} otherwise. If our paths in $H$ are only pairwise edge-disjoint then we require that all switch-nodes of $N$ are non-blocking.

\subsection{Transversal designs}\label{sec:tdesigns}

The notion of a transversal design is crucial to what follows.

\begin{definition}\label{def:transversal} Let $k,\Delta\geq 2$. A \emph{$[\Delta,k]$-transversal design\/} $T$ is a triple $(\mathcal{X},\mathcal{D},\mathcal{U})$ where: $|\mathcal{X}|=\Delta k$; $\mathcal{D} = (D_1,D_2,\ldots,D_\Delta)$ is a partition of $\mathcal{X}$ into $\Delta$ equal-sized \emph{groups\/} (each of size $k$); and $\mathcal{U}=\{U_j:j=1,2,\ldots,k^2\}$ is a family of $k^2$ subsets of $\mathcal{X}$, each of size $\Delta$ and called a \emph{block\/}, so that
\begin{itemize}
\item $|D_i\cap U_j| = 1$, for $i=1,2,\ldots, \Delta$, $j=1,2,\ldots,k^2$
\item each pair of elements $\{x_i,x_j\}$, where $x_i\in D_i$, $x_j\in D_j$ and $i\neq j$, is contained in exactly $1$ block.
\end{itemize}
\end{definition}

We adopt a graph-theoretic perspective on transversal designs as defined in Definition~\ref{def:transversal}: we think of the $[\Delta,k]$-transversal design $T$ as a bipartite graph where the elements of $\mathcal{X}$ (resp. $\mathcal{U}$) lie on the left-hand side (resp. right-hand side) of the partition, and so are called nodes (resp. blocks) within the bipartite graph, and so that in this bipartite graph there is an edge $(p,Q)$, for $p\in \mathcal{X}$ and $Q\in \mathcal{U}$, if, and only if, in the transversal design the element $p$ is in the block $Q$. Note that the bipartite graph corresponding to the transversal design from Definition~\ref{def:transversal} is a $(k,\Delta)$-bipartite graph. Henceforth, we adopt our bipartite graph framework and regard both hypergraphs and transversal designs as bipartite graphs (unless we state otherwise).

There is an intimate relationship involving transversal designs, \emph{orthogonal arrays\/} and \emph{mutually orthogonal latin squares\/}, although there is no need to give definitions here. However, it is well known: that there are $\Delta$ mutually orthogonal latin squares of order $k$ if, and only if, there is a $[\Delta+2,k]$-orthogonal array if, and only if, there is a $[\Delta+2,k]$-transversal design; and that there are at most $k-1$ mutually orthogonal latin squares of order $k$ (see, for example, \cite{Sti04}). Hence, if we have a $[\Delta,k]$-transversal design then $\Delta\leq k+1$. Also, if $k$ is a prime power then a $[\Delta,k]$-transversal design exists whenever $2\leq \Delta \leq k+1$ (again, see \cite{Sti04}). We shall use these facts later on. The study of the existence of $[\Delta,k]$-transversal designs, for various $\Delta$ and $k$, is a long-standing area of research.

We require one final bit of notation. If $T$ is some transversal design, as in Definition~\ref{def:transversal}, and $x$ and $y$ are nodes in distinct groups then we refer to the unique block adjacent to both $x$ and $y$ as the block \emph{generated\/} by $x$ and $y$.

\section{The $3$-step Construction and its Extensions}\label{sec:3stepconstruction}

We now describe the \emph{$3$-step construction\/} for building bipartite graphs (or, equivalently, hypergraphs) by using a `base' bipartite graph and a transversal design (which we think of as a bipartite graph). This construction originated in \cite{BBD95} and was used in \cite{QFZ15}. We then explain how this construction was subsequently extended in \cite{QFZ15} both by iteration and by composition so as to yield switch-centric DCNs.

\subsection{The $3$-step construction}\label{subsec:3step}

The $3$-step construction proceeds as follows.\smallskip

\noindent\underline{Step 1}: Let $H_0$ be a connected $(d,\Delta)$-bipartite graph so that there are $n$ nodes (on the left-hand side of the partition, each of degree $d$) and $e$ blocks (on the right-hand side, each of rank $\Delta$). Such an $H_0$ can be visualized as in Fig.~\refstepcounter{fig}\thefig\label{dDeltaBipartite} (ordinarily, we represent nodes as circles and blocks as squares).\smallskip

\begin{figure}[t]
\centering
\scalebox{0.68}[0.68]{
\includegraphics{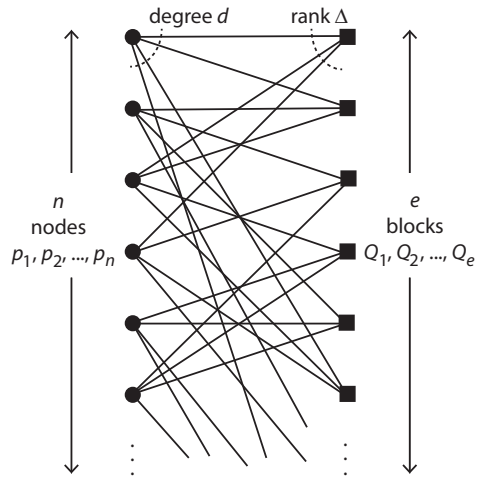}}
\caption{A $(d,\Delta)$-bipartite graph $H_0$.}
\end{figure}

\noindent\underline{Step 2}: Let $T$ be a $[\Delta,k]$-transversal design. In particular, there are $\Delta$ groups of $k$ nodes (on the left-hand side) as well as $k^2$ blocks (on the right-hand side). Such a $T$ can be visualized as in Fig.~\refstepcounter{fig}\thefig\label{DeltakTransversal}. Build the bipartite graph $H$ as follows. For every node $p$ of $H_0$, introduce a group $G_p$ of $k$ nodes of $H$; we say that the group of nodes $G_p$ of $H$ is \emph{associated with\/} the node $p$ of $H_0$. For every block $Q$ of $H_0$, adjacent to the nodes $p_1,p_2,\ldots,p_\Delta$ in $H_0$, introduce a copy of $T$, denoted $T_Q$, rooted on the $\Delta$ groups of nodes $G_{p_1},G_{p_2},\ldots,G_{p_\Delta}$; so, \emph{associated with\/} the block $Q$ of $H_0$, we have a set $B_Q$ of $k^2$ blocks in $H$. We refer to the $\Delta$ groups of nodes $G_{p_1},G_{p_2},\ldots,G_{p_\Delta}$ as the \emph{roots\/} of the copy $T_Q$ of $T$ in $H$. Such a bipartite graph $H$ can be visualized as in Fig.~\refstepcounter{fig}\thefig\label{3StepTwo} where two of the copies of $T$ are partially shown (note that they might have some roots in common but their respective sets of blocks are always disjoint as are their sets of edges). The bipartite graph $H_0$ provides a template as to how we introduce copies of $T$ to form $H$.

\begin{figure}[t]
\centering
\scalebox{0.68}[0.68]{
\includegraphics{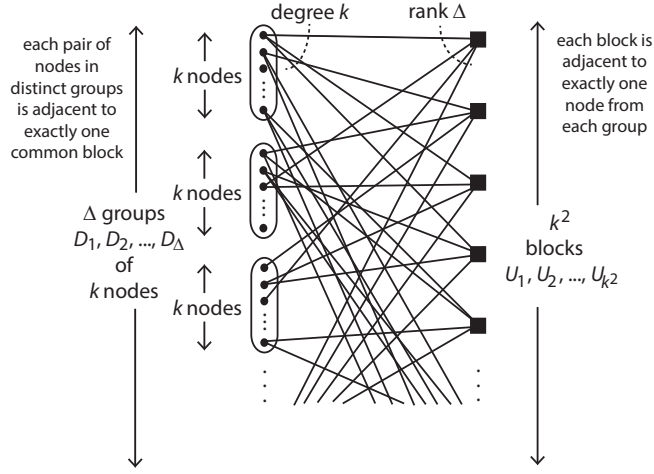}}
\caption{A $[\Delta,k]$-transversal design $T$.}
\end{figure}

\begin{figure}[t]
\centering
\scalebox{0.68}[0.68]{
\includegraphics{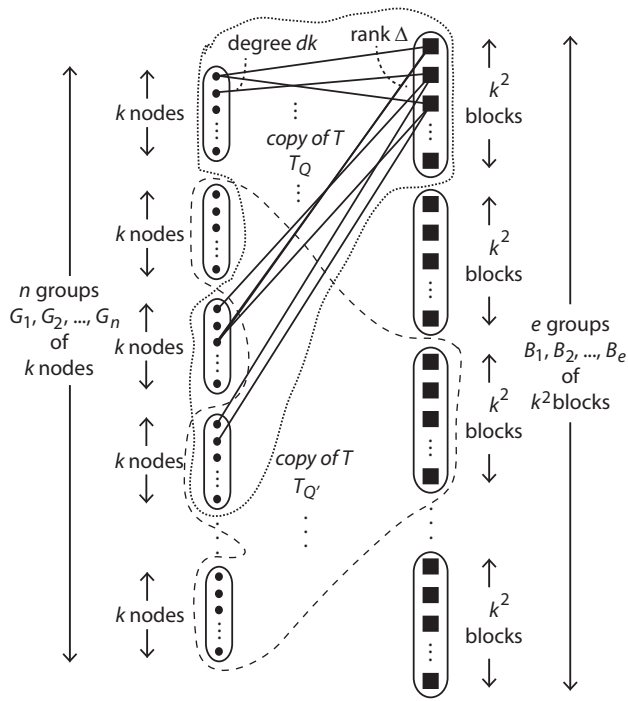}}
\caption{Amalgamating $H_0$ and $T$ to get $H$.}
\end{figure}

Note that:
\begin{itemize}
\item each node of $H$ can be indexed as $a_{p,j}$, where $p\in\{p_i:i=1,2,\ldots,n\}$ and $j\in\{1,2,\ldots,k\}$, so that $p$ is the node of $H_0$ to which the group $G_p$ in which $a_{p,j}$ sits is associated and $j$ is the index of the node $a_{p,j}$ in this group
\item each block of $H$ can be indexed as $B_{Q,U}$, where $Q\in\{Q_i:i=1,2,\ldots,e\}$ and $U\in\{1,2,\ldots,k^2\}$, so that $Q$ is the block of $H_0$ to which the set of blocks $B_Q$ in which $B_{Q,U}$ sits is associated and $U$ is the block of $T$ to which $B_{Q,U}$ corresponds.
\end{itemize}
In addition, each node of $T$ can be indexed $u_{i,j}$, where $i\in\{1,2,\ldots,\Delta\}$ and $j\in\{1,2,\ldots,k\}$, so that $D_i$ is the group of nodes in which $u_{i,j}$ sits and $j$ is the index of $u_{i,j}$ in that group.\smallskip

\noindent\underline{Step 3}: Let $H^\ast$ be the bipartite graph obtained from the bipartite graph $H$ by reversing the roles of nodes and blocks (so, $H^\ast$ is the dual bipartite graph of $H$). Note that the bipartite graph $H^\ast$ is regular of degree $\Delta$ and uniform of rank $dk$.\smallskip

We refer to the $(dk,\Delta)$-bipartite graph $H$ (resp. the $(\Delta,dk)$-bipartite graph $H^\ast$) constructed above as having been constructed by the $2$-step (resp. $3$-step) method using the $(d,\Delta)$-bipartite graph $H_0$ and the $[\Delta,k]$-transversal design $T$. Note that $H$ (resp. $H^\ast$) has $nk$ nodes (resp. $ek^2$ nodes) and $ek^2$ blocks (resp. $nk$ blocks).

Our intention with our constructions is to ultimately design switch-centric DCNs with beneficial properties (as we outlined in Section~\ref{sec:hyperDCNs}). Whilst there are many properties we would like our DCNs to have, it is important that DCNs can integrate a large number of server-nodes so that the server-node-to-server-node distances are short and so that there is redundancy as to which (short) server-node-to-server-node routes we choose to use. In our framework of bipartite graphs, this translates as building bipartite graphs with a large number of nodes and with redundant (short) node-to-node paths. As a first step, the following result was proven in \cite{BBD95} (it is actually derivable from the proofs of our upcoming results) and allows us control over the length of shortest block-to-block paths in $2$-step constructions (and so shortest node-to-node paths in $3$-step constructions).

\begin{theorem}[\cite{BBD95}]\label{thm:linediameter}
Suppose that the $(dk,\Delta)$-bipartite graph $H$ has been constructed using the $2$-step method using the $(d,\Delta)$-bipartite graph $H_0$ and the $[\Delta,k]$-transversal design $T$. If $H_0$ has line-diameter $\lambda\geq 4$ then $H$ has line-diameter $\lambda$.
\end{theorem}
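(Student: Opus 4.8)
The plan is to transfer the line-diameter bound from $H_0$ to $H$ by exploiting the layered structure of the $2$-step construction. A block-to-block path in $H$ goes from some block $B_{Q,U}$ in the set $B_Q$ associated with a block $Q$ of $H_0$ to some block $B_{Q',U'}$ in the set $B_{Q'}$ associated with a block $Q'$ of $H_0$; the key observation is that every step of such a path, when ``projected'' onto $H_0$ by sending a node $a_{p,j}$ of $H$ to the node $p$ of $H_0$ and a block $B_{Q,U}$ of $H$ to the block $Q$ of $H_0$, either stays put or moves one step along $H_0$. So a block-to-block walk in $H$ of length $\ell$ projects to a block-to-block walk in $H_0$ of length at most $\ell$ — this immediately gives the lower bound $\mathrm{line\text{-}diam}(H) \geq \lambda$ (take $Q,Q'$ realising the line-diameter of $H_0$: any path between a block of $B_Q$ and a block of $B_{Q'}$ must project to a $Q$-to-$Q'$ walk in $H_0$, hence has length at least $\lambda$; one also needs that such blocks are at distance \emph{exactly} $\lambda$, not more, which follows from the upper-bound construction below applied with $Q\neq Q'$).

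For the upper bound $\mathrm{line\text{-}diam}(H) \leq \lambda$, I would take two arbitrary blocks $B_{Q,U}$ and $B_{Q',U'}$ of $H$ and build a path of length at most $\lambda$. First fix a shortest block-to-block path $Q = Q_0, p_0, Q_1, p_1, \ldots, Q_t$ in $H_0$ (so $t \leq \lambda$), and note $t$ is even since $H_0$ is bipartite with blocks on one side. Now lift this to $H$: inside $B_{Q_0} = B_Q$ we are at $B_{Q,U}$, which is adjacent (via the copy $T_{Q_0}$) to the nodes $u_{i,1},\ldots$ in every one of its $\Delta$ root-groups, in particular to a node of $G_{p_0}$; from that node we can re-enter the same group via a block of $B_{Q_1}$ (the copy $T_{Q_1}$ uses $G_{p_0}$ as one of its roots), and continue. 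The crux is that a transversal design $T$ is ``fully connected at radius $2$'' in the sense that from any block we can reach, via a single intermediate node, a block adjacent to \emph{any prescribed} node in any prescribed group — because any node in group $D_i$ lies in a block meeting a given other group, etc. This flexibility lets us steer the lifted path so that after traversing the $t$-step skeleton we arrive at a block of $B_{Q'}$; and then — this is where $\lambda \geq 4$ matters — we still have ``slack'' (the path in $H_0$ from $Q_0$ to $Q_t$ has length $t\geq 4$ when $Q\neq Q'$, or we can route around when $Q=Q'$) to maneuver within the final copy $T_{Q'}$ so as to land on exactly $B_{Q',U'}$ rather than just some block of $B_{Q'}$, without increasing the length beyond $\lambda$.

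The main obstacle I anticipate is precisely the endpoint-targeting: getting the lifted path to terminate at the \emph{specified} block $B_{Q',U'}$ (and to start at the specified $B_{Q,U}$), rather than at an arbitrary block of the correct associated set, while respecting the distinctness requirements (all nodes distinct, all blocks distinct along the path) and not exceeding length $\lambda$. This is a bookkeeping argument about how much ``correction'' room a transversal design affords within a bounded number of steps, and the hypothesis $\lambda \geq 4$ should be exactly what guarantees enough room — two of the $\lambda$ available steps can be ``spent'' near the source to exit $B_{Q,U}$ in a controlled way and two near the destination to enter $B_{Q',U'}$, with the genuine $H_0$-traversal happening in between (and the case $Q=Q'$ handled separately, using that $H_0$ is connected with line-diameter $\lambda \geq 4$ so $Q$ has a block-to-block closed walk structure to exploit, or more simply by a direct local argument inside the single copy $T_Q$ together with one excursion to a neighbouring copy). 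I would also need to confirm that the various intermediate blocks and nodes chosen along the way can always be taken distinct; since the copies $T_{Q_i}$ have pairwise disjoint block sets and disjoint edge sets, the only real danger is repeating a node within a single group, which the $k \geq 2$ and the transversal-design axioms rule out with room to spare.
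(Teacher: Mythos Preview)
Your proposal is correct and follows the same line as the argument implicit in the paper. Note that the paper does not give a self-contained proof of this theorem: it is cited from \cite{BBD95} and the paper remarks that it ``is actually derivable from the proofs of our upcoming results'', meaning the constructions inside the proof of Theorem~\ref{thm:main}. Your projection-for-the-lower-bound and lifting-for-the-upper-bound strategy is exactly what those constructions amount to when one asks only for a single path rather than $\Delta$ internally-disjoint ones.

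One point where you are more hesitant than necessary: the endpoint-targeting requires no ``slack'' at all when the $H_0$-distance between $Q$ and $Q'$ is at least $4$. The transversal-design axiom (any two nodes in distinct groups lie in a unique common block) means that from \emph{any} node of $G_{p_{i}}$ you can reach \emph{any} prescribed node of $G_{p_{i+1}}$ in exactly two steps through a block of $T_{Q_i}$. Hence, starting at $B_{Q,V}$, you are forced to its unique neighbour in $G_{p_1}$, but from that moment on you can steer freely; in particular you can arrange to arrive at the unique neighbour of $B_{Q',V'}$ in $G_{p_m}$ after exactly $2m-1$ steps and finish in one more. The lifted path therefore has length \emph{equal} to the $H_0$-distance, not merely bounded by it plus a correction, and distinctness of nodes and blocks along the way is automatic because the $p_i$ and the $Q_i$ on the $H_0$-path are distinct.

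The hypothesis $\lambda\geq 4$ is used only for the short cases: when $Q=Q'$, or when $Q$ and $Q'$ are at distance $2$ in $H_0$ and the relevant neighbours in the shared group $G_{p_1}$ differ. In both situations a single detour through one extra block (inside $T_Q$, using a node from a second group) gives a path of length $4$, which is at most $\lambda$. This is precisely the content of the short paths built in Case~(\emph{b}) and in Case~(\emph{a})(\emph{i}) of the proof of Theorem~\ref{thm:main}.
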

Of course, if $H^\ast$ is the dual bipartite graph of $H$ in Theorem~\ref{thm:linediameter} then it has diameter $\lambda$. We reiterate that our notion of diameter and line-diameter differs from that in \cite{BBD95,QFZ15} (where the length of a block-to-block path is the number of nodes on that path; so, in \cite{BBD95,QFZ15} the bound $\lambda\geq 4$ in our Theorem~\ref{thm:linediameter} appears as $\lambda\geq 2$).

\subsection{Iteration}

We can iterate the $3$-step construction (as was done in \cite{QFZ15}). Note that if $H_0$ is a $(d,\Delta)$-bipartite graph of line-diameter $\lambda\geq 4$, with $n$ nodes and $e$ blocks, then the bipartite graph $H_1$ resulting from the $2$-step construction (using $H_0$ and some $[\Delta,k]$-transversal design $T$) is a $(dk,\Delta)$-bipartite graph of line-diameter $\lambda$. So, repeating the $2$-step construction but with $H_1$ replacing $H_0$ (we keep the same $T$, although we do not have to) yields a $(dk^2,\Delta)$-bipartite graph $H_2$ of line-diameter $\lambda$. By iterating this construction, we can clearly obtain a $(dk^i,\Delta)$-bipartite graph $H_i$ of line-diameter $\lambda$. Converting $H_i$ into $H_i^\ast$ results in a bipartite graph with $ek^{2i}$ nodes, with $nk^i$ blocks, with diameter $\lambda$, and that is regular of degree $\Delta$ and uniform of rank $dk^i$. 

\subsection{Composition}\label{sec:composition}

New methods of composing bipartite graphs (built according to the $3$-step construction) so as to obtain switch-centric DCNs were also derived in \cite{QFZ15}. In \cite{QFZ15}, $4$ such methods were given: Methods $M_1$, $M_2$ and $M_3$ are different cases of Method $A$, below; and Method $M_4$ is Method $B$.

In what follows, let $H$ be a $(\Delta,\delta)$-bipartite graph where $\Delta < \delta$ and where there are $n$ nodes and $e$ blocks.\smallskip

\noindent\underline{Method $A$}: We take $c$ copies of $H$ where $\delta - c\Delta > 0$ and $c\geq 1$. For each node $u$ of $H$: we remove the corresponding node in each of the $c$ copies of $H$ and introduce a new switch-node (common to all copies of $H$); we make all of the $c\Delta$ edges incident with the $c$ original nodes incident with this new switch-node; and we attach $\rho=\delta-c\Delta$ pendant server-nodes to the new switch-node. All blocks of $H$ are considered as switch-nodes. We follow \cite{QFZ15} and call the new switch-nodes \emph{level-$1$ switch-nodes\/} and the original switch-nodes \emph{level-$2$ switch-nodes\/}. The construction of the switch-centric DCN $N(H)$ from $H$ via this method can be visualised as in Fig.~\refstepcounter{fig}\thefig\label{MethodAcgte2}, where we only show the construction for the $c$ nodes coresponding to one node of $H$. Note that every switch-node of $N(H)$ has $\delta$ ports. Also, there is some choice as regards the parameter $c$ (so that choosing different values for $c$ yields different values for $\rho$). We illustrate the special case when $c=1$ in Fig.~\refstepcounter{fig}\thefig\label{MethodAce1}, where $H$ is a $(3,5)$-bipartite graph. The general case when $c\geq 1$ corresponds to Method $M_2$ of \cite{QFZ15}; the special case when $c = 1$ corresponds to Method $M_1$; and the special case when $c=\lfloor\frac{\lfloor\frac{\delta}{2}\rfloor}{\Delta}\rfloor$ corresponds to Method $M_3$. In this latter case, the aim is to ensure that every level-$1$ switch-node is adjacent to roughly the same number of level-$2$ switch-nodes as it is server-nodes. Note that: the number of server-nodes in $N(H)$ is $n(\delta-c\Delta)$; the number of level-$1$ switch-nodes is $n$; and the number of level-$2$ switch-nodes is $ce$.\smallskip

\begin{figure}[t]
\centering
\scalebox{0.68}[0.68]{
\includegraphics{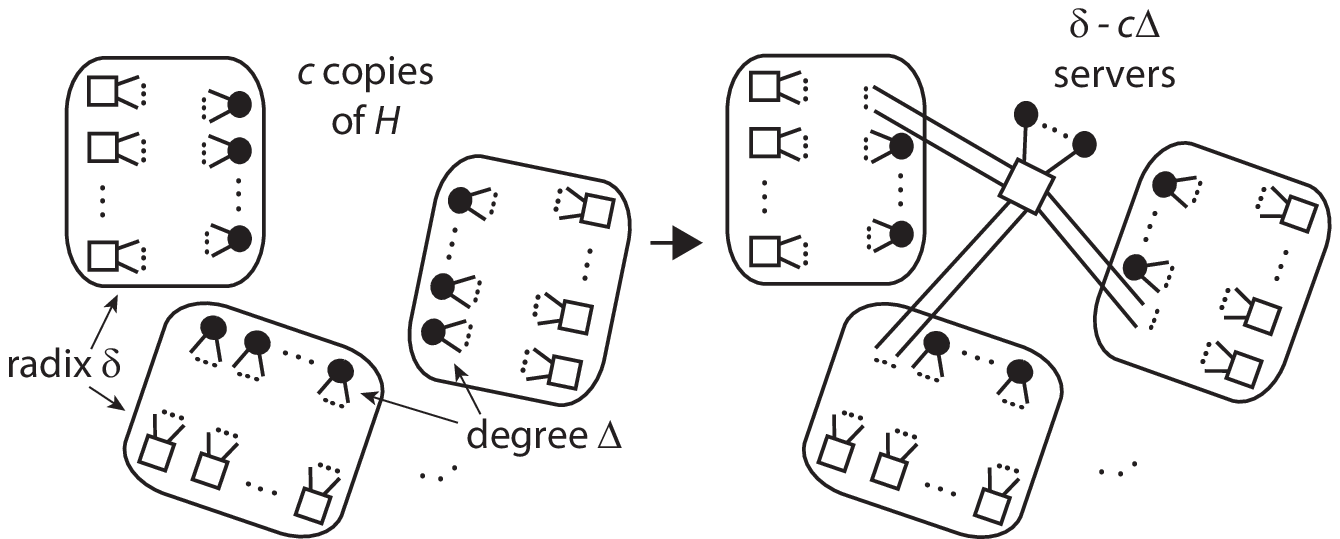}}
\caption{Building a switch-centric DCN via Method $A$ when $c>1$.}
\end{figure}

\begin{figure}[t]
\centering
\scalebox{0.68}[0.68]{
\includegraphics{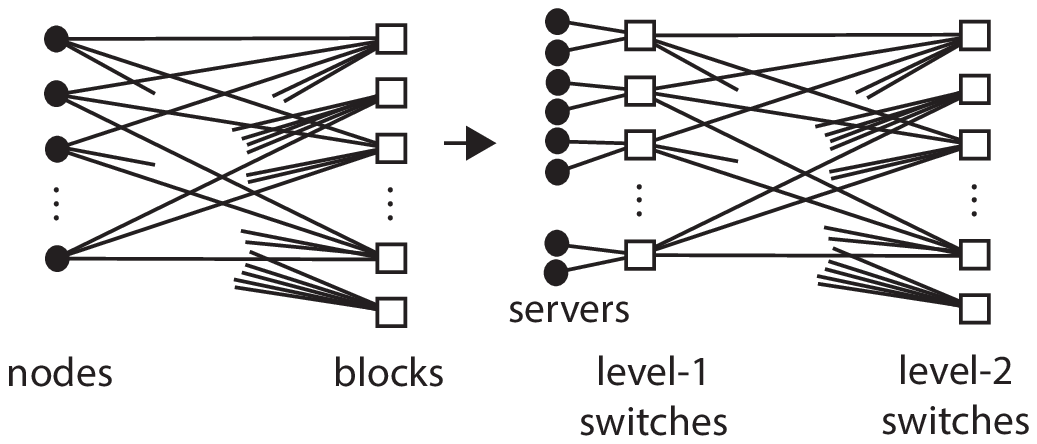}}
\caption{Building a switch-centric DCN via Method $A$ when $c=1$.}
\end{figure}

\noindent\underline{Method $B$}: We now work with a switch-centric DCN as constructed by Method $A$. Let every level-$1$ switch-node have $\rho$ adjacent server-nodes. Suppose that there is an even number of level-$1$ switch-nodes. Partition the set of level-$1$ switch-nodes into pairs. For each pair of switch-nodes $(S^\prime,S^{\prime\prime})$: remove $\lfloor\frac{\rho}{2}\rfloor$ server-nodes that are adjacent to $S^\prime$ and remove $\lceil\frac{\rho}{2}\rceil$ server-nodes that are adjacent to $S^{\prime\prime}$; and make every server-node that is adjacent to the switch-node $S^\prime$ or the switch-node $S^{\prime\prime}$ also adjacent to the other switch-node. Note that the number of ports of any switch-node has not changed but that every server-node is now adjacent to $2$ switch-nodes. The philosophy behind this construction is to better tolerate the failure of a level-1 switch-node. The construction can be visualized as in Fig.~\refstepcounter{fig}\thefig\label{MethodB} where paired level-$1$ switch-nodes have the same shade of grey and where $\rho=3$.\smallskip

\begin{figure}[t]
\centering
\scalebox{0.68}[0.68]{
\includegraphics{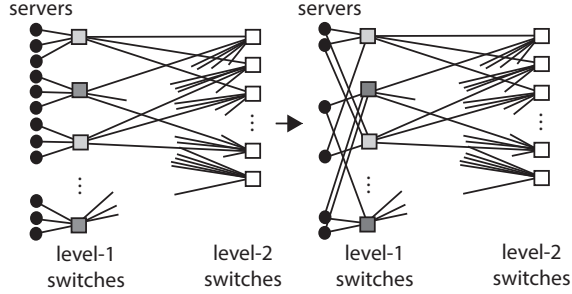}}
\caption{Building a switch-centric DCN via Method $B$.}
\end{figure}

\subsection{Some illustrations of DCNs}

In \cite{QFZ15}, switch-centric DCNs constructed using the $3$-step method allied with Methods $A$ and $B$ were favourably compared with the $3$-level Fat-Tree DCN from \cite{ALV08} with regard to the number of server-nodes therein when the diameter and the number of ports of a switch-node are held constant. The reader is referred to \cite{ALV08,QFZ15} for full details as regards the topology of Fat-Tree and to Tables 2--4 in \cite{QFZ15} for the complete comparison; however, we include a replicated table here purely for illustrative purposes. In Table~\ref{tab:comparisontable} (which is Table~2 from \cite{QFZ15}): the number of ports of any switch-node is forced to be $64$; the diameters of the DCNs resulting from using the $3$-step method, iteration and composition are forced to be (at most) $6$ (like that of Fat-Tree); and the numbers of server-nodes and switch-nodes in the resulting DCNs are as given (note that the length of a server-node-to-server-node path as defined in \cite{QFZ15} is the number of switch-nodes on it, which is one less than our notion of length which is the number of links on the path). 

\begin{table}[t]
\setlength{\tabcolsep}{5.5pt}
\caption{Comparing switch-centric DCNs built with switch-nodes with $64$ ports}
\centering
\begin{tabular}{|l| c| c| c| c|}
\hline
 network & \# switch ports & diameter & \# server-nodes & \# switch-nodes \\
\hline
 Fat-Tree & $64$ & $6$ & $65,536$ & $5,120$\\
 $H^\ast$ & $64$ & $4$ & $54,720$ & $6,840$\\
 $N_A^1(H^\ast)$ & $64$ & $6$ & $3,064,320$ & $61,560$\\
 $N_A^2(H^\ast)$ & $64$ & $6$ & $437,760$ & $102,600$\\
 $N_A^3(H^\ast)$ & $64$ & $6$ & $1,751,040$ & $82,080$\\
 $N_B(H^\ast)$ & $64$ & $6$ & $1,532,160$ & $61,560$\\
 $\bar{H}^\ast$ & $64$ & $4$ & $20,480$ & $1,280$\\
 $N_A^1(\bar{H}^\ast)$ & $64$ & $6$ & $1,228,800$ & $21,760$ \\
\hline
\end{tabular}
\label{tab:comparisontable}
\end{table}

\begin{itemize}

\item The bipartite graph $H^\ast$ is obtained using the $3$-step method starting with a $(8,8)$-bipartite graph $H_0$, that has $855$ nodes, $855$ blocks, and diameter and line-diameter $4$ (such a bipartite graph $H_0$ exists; see \cite{MS13}), and a $[8,8]$-transversal design $T$. The DCN $H^\ast$ in Table~\ref{tab:comparisontable} is the DCN obtained by simply regarding every node of the bipartite graph $H^\ast$ as a server-node (note that in this DCN we require that every server-node has $8$ NIC ports); the DCN $N_A^1(H^\ast)$ (resp. $N_A^2(H^\ast)$, $N_A^3(H^\ast)$) is obtained by employing Method $A$ with $c=1$ (resp. $c=7$, $c=4$); and the DCN $N_B(H^\ast)$ is obtained by employing Method $B$ with $N_A^1(H^\ast)$ (note that the number of switch-nodes entry in Table~2 in \cite{QFZ15} is incorrect). 
 
\item The bipartite graph $\bar{H}$ is obtained using the $3$-step method iterated twice, starting with a $(4,4)$-bipartite graph $\bar{H}_0$, that has $80$ nodes, $80$ blocks, and diameter and line-diameter $4$ (such a bipartite graph $\bar{H}_0$ exists; see \cite{MS13}), and a $[4,4]$-transversal design  $\bar{T}$ (actually, in \cite{QFZ15} this transversal design is not mentioned; it does, however, exist). The DCN $\bar{H}^\ast$ in Table~\ref{tab:comparisontable} is the DCN obtained by simply regarding every node of the bipartite graph $\bar{H}^\ast$ as a server-node (note that the number of server-nodes entry in Table~2 in \cite{QFZ15} is incorrect, though the correct number is stated in the text); and the DCN $N_A^1(H^\ast)$ is obtained by employing Method $A$ with $c=1$ (note that the numbers of server-nodes and of switch-nodes entries in Table~2 in \cite{QFZ15} are incorrect).

\end{itemize}
It is clear from Table~\ref{tab:comparisontable} (and from \cite{QFZ15}) that we can build much bigger server-centric DCNs using the $3$-step method and the subsequent iterations and compositions than Fat-Tree but without increasing the diameter (which is a proxy for latency); of course, we would wish the new DCNs to have other properties that make them attractive within a data centre context. Establishing such properties was essentially the whole point of \cite{QFZ15} and we continue with this line of research in what follows.

Before we move to our main results, let us comment on using the $2$-step method as opposed to the $3$-step method when building our switch-centric DCNs (the same comment was made in \cite{QFZ15}). Note that when one uses the (iterated) $2$-step method, whilst the rank of the resulting bipartite graph stays the same, the degree grows. Were we to attach server-nodes to the switch-nodes that replace the nodes of the $2$-step bipartite graph $H$, rather than the $3$-step bipartite graph $H^\ast$, the number of ports of the level-$2$ switch-nodes (which would be $\Delta$) would be much less than the number of ports of the level-$1$ switch-nodes. Hence, it makes more sense to proceed as we have done above.

\section{One-to-one path diversity}\label{sec:onetoonepathdiv}

So far, we have set the scene from \cite{QFZ15} and described a method by which we can build bipartite graphs (the $3$-step method) which can then be transformed into switch-centric DCNs with many more servers than Fat-Tree whilst maintaining the diameter of Fat-Tree, i.e., $6$. However, as we mentioned earlier, there are many more aspects to the design of DCNs with an important one being path diversity. In what follows, we highlight some problems with the proofs of one-to-one path diversity in \cite{QFZ15} for bipartite graphs built using the $3$-step method. We then provide not only correct proofs as regards one-to-one path diversity but we also extend and improve the analysis in \cite{QFZ15} with new results. We end the section by  applying our constructions so as to build DCNs with good one-to-one path diversity properties.

\subsection{Difficulties with proofs}\label{sec:difficulties}

In order to detail the difficulties in \cite{QFZ15}, we adopt the terminology of \cite{QFZ15}. There are slight problems with the proof of Theorem 2 in \cite{QFZ15} (although they are easily surmountable). For example, in Subcases (1.2) and (2.2), $\{r_i,s_i,t_i\}\subseteq G_i^E$ and consequently we cannot generate the blocks $R_j$ and $S_j$. Also, in Subcase (2.1), the situation where $q\in P\cap Q\setminus\{p\}$ is not considered; it could be that $r_j=s_j$, for some $j\neq i$.

An attempt was also made in \cite{QFZ15} to extend Theorem 2 of \cite{QFZ15}: see Theorem~3 of \cite{QFZ15}. Assumptions concerning the connectivity of $H_0$ are made and the existence of additional paths in $H^\ast$ to those constructed in the proof of Theorem~2 are claimed in the situation when the two blocks $B_{Q,U}$ and $B_{Q^\prime,U^\prime}$ are such that $Q\neq Q^\prime$ (recall our method of indexing in Section~\ref{subsec:3step} which we adopt here). However, there are serious flaws in the proof of Theorem~3 of \cite{QFZ15}, so much so that the theorem is untrue. In short, Theorem~3 of \cite{QFZ15} claims that if there are $\omega$ pairwise internally-disjoint paths in $H_0$ from $Q$ to $Q^\prime$ then there are $\min\{\Delta\omega,k\omega\}$ pairwise internally-disjoint paths in $H$ from $B_{Q,U}$ to $B_{Q^\prime,U^\prime}$. This does not make sense: the maximum number of pairwise internally-disjoint paths in $H$ from $B_{Q,U}$ to $B_{Q^\prime,U^\prime}$ is $\Delta$ (as the bipartite graph $H$ has rank $\Delta$) and so we must have that $\min\{\Delta\omega,k\omega\} \leq \Delta$. For instance, in Example 1 of \cite{QFZ15}, the bipartite graph $H_0$ is the cycle of length $10$ ($H_0$ is derived from the cycle of length $5$ using its natural representation as a hypergraph; see Section~\ref{sec:hypergraphs}), so that $d=\Delta=2$, $n=e=5$, and there $2$ internally-disjoint paths from any block of $H_0$ to any other block of $H_0$. A $[2,3]$-transversal design $T$ is used and the bipartite graph $H^\ast$ built by the $3$-step method has rank $6$ and degree $2$. However, if Theorem~3 of \cite{QFZ15} were true then there would be $4$ pairwise disjoint paths from $B_{Q,U}$ to $B_{Q^\prime,U^\prime}$ in $H^\ast$ which clearly cannot be the case.

\subsection{The one-to-one scenario}

We now resurrect (some of) the proofs of the main results from \cite{QFZ15} and extend the results claimed in that paper. The following lemma proves most useful.

\begin{lemma}\label{lem:genblocks}
Let $T$ be some $[\Delta,k]$-transversal design with groups of nodes $\{D_1,$\linebreak$D_2,\ldots,D_\Delta\}$. Let $U$ be some block of $T$. For each $i\in\{1,2,\ldots,\Delta\}$, let $r_i\in D_i$ be the unique node of $D_i$ that is adjacent to $U$. Set $R=\{r_i:i=1,2,\ldots,\Delta\}$. Let $P$ be a set of distinct pairs of nodes so that: exactly one node of any pair in $P$ is in $R$ and no node of $R$ is in more than one pair of $P$; and no pair in $P$ is such that both nodes lie in the same group. The blocks generated by the pairs in $P$ are all distinct and different from $U$.
\end{lemma}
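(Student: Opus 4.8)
The plan is to lean on two elementary features of the $[\Delta,k]$-transversal design $T$. First, since $U$ is a block of rank $\Delta$ meeting each of the $\Delta$ groups in exactly one node, the set of nodes adjacent to $U$ is precisely $R=\{r_1,\ldots,r_\Delta\}$. Second (the ``exactly one block'' clause of Definition~\ref{def:transversal}), any two nodes lying in distinct groups are adjacent to exactly one common block. Note also that every pair in $P$ has its two nodes in distinct groups, so the notion of the block \emph{generated\/} by that pair is well defined in each case.

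I would first check that no block generated by a pair of $P$ equals $U$. Consider a pair of $P$; by hypothesis it has exactly one node in $R$, say it is $\{r_i,s\}$ with $r_i\in R$ and $s\notin R$. The block generated by $r_i$ and $s$ is by definition adjacent to $s$, so if it were $U$ then $s$ would be adjacent to $U$, i.e.\ $s\in R$ --- a contradiction. Hence every block generated by a pair of $P$ differs from $U$.

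I would then check distinctness. Because exactly one node of each pair of $P$ lies in $R$ and no node of $R$ lies in two pairs, the map sending each pair of $P$ to its unique $R$-node is injective; so two distinct pairs $\{r_i,s\}$ and $\{r_j,t\}$ of $P$ have $r_i\neq r_j$, and since $r_i$ (resp.\ $r_j$) is the unique node of $D_i$ (resp.\ $D_j$) adjacent to $U$, we must have $i\neq j$, so $r_i$ and $r_j$ lie in distinct groups. If these two pairs generated the same block $B$, then $B$ would be adjacent to both $r_i$ and $r_j$; but $U$ is also adjacent to both $r_i$ and $r_j$ (as $r_i,r_j\in R$), so the ``exactly one block'' clause forces $B=U$, contradicting the previous step. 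Thus the blocks generated by the pairs of $P$ are pairwise distinct and all different from $U$. I do not foresee a genuine obstacle here; the only point needing care is the bookkeeping that identifies, for each pair of $P$, which node is its $R$-node and which group it belongs to, so that the uniqueness clause of Definition~\ref{def:transversal} can be applied cleanly to the two $R$-nodes of two supposedly colliding pairs.
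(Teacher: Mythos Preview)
Your proof is correct and follows essentially the same approach as the paper: first showing that no generated block can equal $U$ (since the non-$R$ node of the pair would then be adjacent to $U$), and then showing distinctness by noting that if two distinct pairs generated the same block $B$, then $B$ would be adjacent to two distinct root-nodes $r_i,r_j$ in different groups, forcing $B=U$ by the uniqueness clause of the transversal design. The paper's argument is slightly terser but structurally identical.
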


\begin{proof}
Suppose that $\{r_i,x\}\in P$, where $x\in D_l\setminus R$ with $l\neq i$ and where $i\in\{1,2,\ldots,\Delta\}$. Let $U_{r_i,x}$ be the block generated by $r_i$ and $x$. If $U_{r_i,x}=U$ then $U$ is adjacent to the distinct nodes $r_l$ and $x$ in $D_l$ which yields a contradiction.

Suppose that $\{r_j,y\}\in P\setminus\{\{r_i,x\}\}$, where $j\in\{1,2,\ldots,\Delta\}$. Let $U_{r_j,y}$ be the block generated by $r_j$ and $y$. Suppose that $U_{r_i,x}=U_{r_j,y}$; hence, $U_{r_i,x}$ is adjacent to both $r_i$ and $r_j$ with $i\neq j$. As any two nodes lying in distinct groups in $T$ are adjacent to a unique block of $T$, we must have that $U_{r_i,x}=U_{r_j,y}=U$; but this yields a contradiction as above. Hence, the blocks generated by the pairs in $P$ are all distinct and all different from $U$.
\end{proof}

We use this lemma throughout, both explicitly and implicitly.

Our main result in the one-to-one context is concerned with building as many pairwise internally-disjoint paths as we can from any block to any other block in the bipartite graph built using the $2$-step method (or, equivalently, from any node to any other node in the bipartite graph built using the $3$-step method). We explain the impact of the existence of these paths on the path diversity of subsequently built DCNs presently. One added and significant complication in the proof of the following result comes about when the transversal design $T$ is a $[k+1,k]$-transversal design (so, there is the potential for $\Delta=k+1 > k$ paths).

\begin{theorem}\label{thm:main}
Let $k,\Delta,d \geq 2$. Let $H$ be built by the $2$-step method from the $(d,\Delta)$-bipartite graph $H_0$ using the $[\Delta,k]$-transversal design $T$.

\begin{itemize}

\item[\emph{(\/}a\emph{)}] Let $Q$ and $Q^\prime$ be distinct blocks of $H_0$ so that there are $\lambda\geq 1$ pairwise internally-disjoint paths in $H_0$ from $Q$ to $Q^\prime$, each of length at most $\mu$. There are $\min\{\Delta,k\}$ pairwise internally-disjoint paths from any block $B_{Q,V}$ of $H$ to any other block $B_{Q^\prime,V^\prime}$ of $H$. Furthermore, if $\lambda \geq 2$ then there are $\Delta$ pairwise internally-disjoint paths from any block $B_{Q,V}$ of $H$ to any other block $B_{Q^\prime,V^\prime}$ of $H$. All paths have length at most $\mu+4$.

\item[\emph{(\/}b\emph{)}] If $B_{Q,V}$ and $B_{Q,V^\prime}$ are distinct blocks of $H$ then there are $\Delta$ pairwise internally-disjoint paths from $B_{Q,V}$ to $B_{Q,V^\prime}$, each of length at most $6$ and lying entirely within $T_Q$.

\end{itemize}
\end{theorem}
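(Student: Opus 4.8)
The goal is to build $\Delta$ pairwise internally-disjoint paths between two distinct blocks $B_{Q,V}$ and $B_{Q,V^\prime}$ that both lie in the copy $T_Q$ of $T$ sitting on the roots $G_{p_1},\ldots,G_{p_\Delta}$, and to do so entirely inside $T_Q$. Since everything happens inside a single copy of the transversal design $T$, I would first forget about $H$ altogether and prove the corresponding statement about $T$ itself: in a $[\Delta,k]$-transversal design, any two distinct blocks $U$ and $U^\prime$ are joined by $\Delta$ pairwise internally-disjoint paths, each of length at most $6$. Transferring this back to $T_Q\subseteq H$ is immediate because $T_Q$ is an isomorphic copy of $T$ and a path inside $T_Q$ is by definition a path in $H$.

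**The core combinatorial step.** Write $U=\{r_1,\ldots,r_\Delta\}$ and $U^\prime=\{r_1^\prime,\ldots,r_\Delta^\prime\}$ where $r_i,r_i^\prime\in D_i$. For each $i$, I look at whether $r_i=r_i^\prime$ or not. If $r_i=r_i^\prime$, then $U$ and $U^\prime$ share the node $r_i\in D_i$, giving the length-$2$ path $U,r_i,U^\prime$. If $r_i\neq r_i^\prime$, I need a longer detour: the natural move is to pick some other group $D_j$ and route $U \to r_i \to$ (block generated by $r_i$ and some node $x\in D_j$) $\to x \to$ (block generated by $x$ and $r_i^\prime$) $\to r_i^\prime \to U^\prime$, a path of length $6$. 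The main work is to make $\Delta$ such choices simultaneously so that the resulting paths are pairwise internally-disjoint — i.e., the intermediate nodes $r_i$, the nodes $x$, and the two generated blocks are all distinct across the $\Delta$ paths and avoid $U,U^\prime$. The nodes $r_1,\ldots,r_\Delta$ are automatically distinct (they lie in distinct groups), and likewise $r_1^\prime,\ldots,r_\Delta^\prime$; Lemma~\ref{lem:genblocks} is exactly the tool that guarantees the generated blocks are distinct and different from $U$, provided I choose the auxiliary nodes $x$ to form a valid set $P$ of pairs (one endpoint in $U$'s node-set, no two endpoints in the same group, no repeats). The cleanest way to organise the auxiliary nodes is to use a fixed-point-free permutation: for path $i$ (when $r_i\neq r_i^\prime$) route through group $D_{\sigma(i)}$ for some derangement $\sigma$, landing on a node of $D_{\sigma(i)}$ distinct from $r_{\sigma(i)}$, $r_{\sigma(i)}^\prime$, and from the landing nodes used by the other paths; since each group has $k\geq 2$ nodes one has to check there is enough room, and if $k$ is small one falls back on the length-$2$ connections wherever $r_i=r_i^\prime$.

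**The expected obstacle.** The delicate point is simultaneity: ensuring that across all $\Delta$ paths the chosen intermediate nodes in each group, and the $2\Delta$-ish generated blocks, really are all distinct, and that a length-$4$ or length-$6$ detour for one index does not collide with the endpoint $r_{i'}$ or $r_{i'}^\prime$ of another index — and also that $U$ and $U^\prime$ themselves are not re-used as intermediate blocks (this is where Lemma~\ref{lem:genblocks} does the heavy lifting, since it is precisely the statement that blocks generated by a well-chosen $P$ avoid a given block). A secondary subtlety is the case $\Delta=k+1$ mentioned in the paper: here the groups are just barely large enough, so the pigeonhole margins are tight and one must be slightly careful about which indices get a length-$2$ path and which get a length-$6$ one; but since this part claims at most $6$ rather than the optimal count, there should be enough slack. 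I would handle it by noting that whenever $r_i\neq r_i^\prime$ for fewer than $\Delta$ indices the construction is easy, and when $r_i\neq r_i^\prime$ for all (or nearly all) indices, the nodes $r_i^\prime$ already occupy one slot per group, leaving $k-1\geq 1$ free nodes per group, which, together with a derangement on the indices and a careful accounting, suffices to route all the detours disjointly. The bound of $6$ on the path lengths then follows by construction, and confinement to $T_Q$ is automatic since every vertex used (roots, nodes of $T_Q$, blocks of $B_Q$) belongs to $T_Q$.
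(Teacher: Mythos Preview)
Your proposal addresses only part~(\emph{b}) of the theorem. Part~(\emph{a})---building $\min\{\Delta,k\}$ (or $\Delta$, when $\lambda\geq 2$) internally-disjoint paths between blocks $B_{Q,V}$ and $B_{Q',V'}$ lying in \emph{different} copies $T_Q$ and $T_{Q'}$---is the bulk of the theorem and of the paper's proof. That case requires routing through the given internally-disjoint $Q$-to-$Q'$ paths in $H_0$, carefully batching the groups of $T_Q$ and $T_{Q'}$, and separately handling the troublesome situation $\Delta=k+1$ (where the paper resorts to an explicit analysis of the unique $[3,2]$-transversal design). None of this is touched by your plan, so as a proof of the stated theorem the proposal is incomplete.

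For part~(\emph{b}) itself, your idea is in the right spirit but misses a simplification that the paper exploits and, as written, has a gap. You build length-$6$ detours $U,r_i,B_{r_i,x},x,B_{x,r_i'},r_i',U'$ through an auxiliary node $x\in D_{\sigma(i)}\setminus\{r_{\sigma(i)},r_{\sigma(i)}'\}$. But two distinct blocks of a transversal design share at most one node, so for all but at most one index $j$ we have $r_j\neq r_j'$; hence the set $D_{\sigma(i)}\setminus\{r_{\sigma(i)},r_{\sigma(i)}'\}$ has size $k-2$, not $k-1$ as you assert, and is empty when $k=2$. (For example, in a $[2,2]$-design with $U=\{a,c\}$ and $U'=\{b,d\}$, there is no room for your $x$ at all.) You also do not verify that the ``first'' intermediate blocks $B_{r_i,x_i}$ are disjoint from the ``second'' intermediate blocks $B_{x_j,r_j'}$ across different indices; Lemma~\ref{lem:genblocks} applied separately to $U$ and to $U'$ handles each family internally but says nothing about cross-collisions.

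The paper sidesteps both difficulties by observing that no auxiliary node is needed: if $r_i\neq s_i$ for $i=1,\ldots,b$ (with $b\geq 2$), one routes $U,r_i,B_{r_i,s_{i+1}},s_{i+1},U'$ (indices mod $b$), a length-$4$ path whose sole intermediate block is generated by an $r$-node and an $s$-node. A single application of Lemma~\ref{lem:genblocks} then shows these blocks are distinct and avoid $U$; they avoid $U'$ because each is adjacent to some $r_i\neq s_i$. Only the degenerate case $b=1$ (forcing $\Delta=2$) needs a length-$6$ workaround. Adopting this ``cycle the $s$-nodes'' trick would repair your part~(\emph{b}) argument cleanly.
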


\begin{proof}Recall that we mentioned in Section~\ref{sec:tdesigns} that necessarily $\Delta\leq k+1$.\smallskip

\noindent\underline{Case (\emph{a\/})(\emph{i\/})}: Suppose that: $\Delta = k+1$; $\lambda\geq 2$; and the distinct nodes $p_1$ and $p_2$ are common neighbours in $H_0$ of $Q$ and $Q^\prime$.\smallskip

\noindent We `batch' the groups of nodes of $T_Q$ and $T_{Q^\prime}$ together so that in each of $T_Q$ and $T_{Q^\prime}$, the $k+1$ groups of nodes form $1$ batch of $k$ groups and $1$ batch of $1$ group as follows:
\begin{itemize}
\item for $i\in\{1,2\}$, define $G_0^i=G_{p_i}=H_0^i$
\item the remaining $k-1$ groups within $T_Q$ are $G_1^1,G_2^1,\ldots,G_{k-1}^1$ and the remaining $k-1$ groups within $T_{Q^\prime}$ are $H_1^1,H_2^1,\ldots,H_{k-1}^1$ so that:
\begin{itemize}
\item any group of the form $G_j^1$, where $j>0$, is associated with some node $p\not\in\{p_1,p_2\}$ of $H_0$ that is adjacent to both $Q$ and $Q^\prime$ if, and only if, the group $H_j^1$ is associated with the same node $p$ of $H_0$ (so, if $G_j^1$ and $H_j^1$ are associated with the same node $p\not\in\{p_1,p_2\}$ of $H_0$ then they are the same group in $H$).
\end{itemize}
\end{itemize}

For each $j\in\{0,1,\ldots,k-1\}$, let $r_j^1\in G_j^1$ (resp. $s_j^1\in H_j^1$) be the unique node of $G_j^1$ (resp. $H_j^1$) that is adjacent to $B_{Q,V}$ (resp. $B_{Q^\prime,V^\prime}$) in $H$. Note that the pair $r_j^1$ and $s_j^1$ lie in the same group of nodes in $H$ if, and only if, both $G_j^1$ and $H_j^1$ are associated with the same node $p$ of $H_0$ and this node $p$ is adjacent to both $Q$ and $Q^\prime$ in $H_0$. The situation can be visualized as in Fig.~\refstepcounter{fig}\thefig\label{basicsetupcaseai} (where in this case $Q$ and $Q^\prime$ have $a+2\geq 2$ common neighbours in $H_0$ and where, for example, $r_1^1\neq s_1^1$ but $r_a^1=s_a^1$).\smallskip

\begin{figure}[t]
\centering
\scalebox{0.68}[0.68]{
\includegraphics{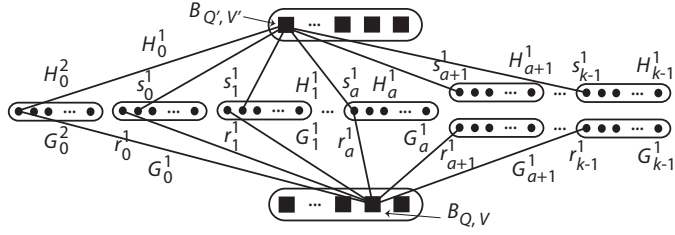}}
\caption{The basic set-up in Case (\emph{a\/})(\emph{i\/}).}
\end{figure}

Let $G_0^1=\{r_0^1,t_1,\ldots,t_{k-1}\}$ and $H_0^1=\{s_0^1,w_1,\ldots,w_{k-1}\}$ so that:
\begin{itemize}
\item if $r_0^1=s_0^1$ then $t_j=w_j$, for $j=1,2,\ldots,k-1$
\item if $r_0^1\neq s_0^1$ then $r_0^1=w_1$, $s_0^1=t_1$ and $t_j=w_j$, for $j=2,3,\ldots,k-1$.
\end{itemize}

We are now ready to generate some blocks within $T_Q$ and $T_{Q^\prime}$ in $H$. For each $j\in\{1,2,\ldots,k-1\}$:
\begin{itemize}
\item let $B_{r_j^1,t_j}$ be the unique block of $T_Q$ in $H$ generated by the nodes $r_j^1\in G_j^1$ and $t_j\in G_0^1$
\item let $B^\prime_{s_j^1,w_j}$ be the unique block of $T_{Q^\prime}$ in $H$ generated by the nodes $s_j^1\in H_j^1$ and $w_j\in H_0^1$.
\end{itemize}
So, we have generated $k-1$ blocks in $T_Q$ and $k-1$ blocks in $T_{Q^\prime}$. Note that any block of $T_Q$ is necessarily distinct from any block of $T_{Q^\prime}$. By Lemma~\ref{lem:genblocks} applied twice to both $T_Q$ and $T_{Q^\prime}$, all blocks of 
$\{B_{r_j^1,t_j} : j=1,2,\ldots,k-1\}$
are distinct and different from $B_{Q,V}$, and all blocks of $\{B^\prime_{s_j^1,w_j} : j=1,2,\ldots,k-1\}$
are distinct and different from $B_{Q^\prime,V^\prime}$. Call these two sets of blocks our working sets of blocks.

We are now in a position to build some paths from $B_{Q,V}$ to $B_{Q^\prime,V^\prime}$ in $H$. If $r_0^1=s_0^1$ then define the paths:
\begin{itemize}
\item $\pi_0^1$ as $B_{Q,V},r_0^1,B_{Q^\prime,V^\prime}$
\item $\pi_1^1$ as $B_{Q,V},r_1^1,B_{Q^\prime,V^\prime}$, if $r_1^1=s_1^1$, and as $B_{Q,V}, r_1^1,B_{r_1^1,t_1},t_1, B^\prime_{s_1^1,w_1},s_1^1,$\linebreak$B_{Q^\prime,V^\prime}$, if $r_1^1\neq s_1^1$ (note that $t_1=w_1$).
\end{itemize}
If $r_0^1\neq s_0^1$ then define the paths:
\begin{itemize}
\item $\pi_0^1$ as $B_{Q,V},r_0^1,B^\prime_{s_1^1,w_1},s_1^1, B_{Q^\prime,V^\prime}$ (note that $w_1=r_0^1$)
\item $\pi_1^1$ as $B_{Q,V},r_1^1,B_{r_1^1,t_1},s_0^1, B_{Q^\prime,V^\prime}$ (note that $t_1=s_0^1$).
\end{itemize}

We'll now build paths from $B_{Q,V}$ to $B_{Q^\prime,V^\prime}$ using nodes from the groups $\{G_0^1\}\cup\{G_j^1,H_j^1:j=2,3,\ldots,k-1\}$. For each $j\in\{2,3,\ldots,k-1\}$:
\begin{itemize}
\item if $r_j^1\neq s_j^1$ then define the path:
\begin{itemize}
\item $\pi^1_j$ as $B_{Q,V},r_j^1,B_{r_j^1,t_j},t_j,B^\prime_{s_j^1, w_j},s_j^1,B_{Q^\prime,V^\prime}$ (note that $t_j=w_j$)
\end{itemize}
\item if $r_j^1=s_j^1$ then define the path: \begin{itemize}
\item $\pi^1_j$ as $B_{Q,V},r_j^1,B_{Q^\prime,V^\prime}$.
\end{itemize}
\end{itemize}

Note that out of all of the $k$ `$\pi$-paths' constructed above, the only way that we can have that two of our paths are not internally-disjoint is when $r_0^1\neq s_0^1$ but $r_1^1=s_1^1$ (in which case $\pi_0^1$ and $\pi_1^1$ share the common node $r_1^1=s^1_1$). In this case, choose $x_1\in G_1^1\setminus\{r_1^1\}$. Let $B_{r_0^1,x_1}$ be the block of $T_Q$ in $H$ generated by $r_0^1(=w_1)\in G_0^1$ and $x_1\in G_1^1$, and let $B^\prime_{s_0^1,x_1}$ be the block of $T_{Q^\prime}$ in $H$ generated by $s_0^1(=t_1)\in G_0^1$ and $x_1\in G_1^1$ (in essence, we have dispensed with the blocks $B_{r_1^1,t_1}$ and $B^\prime_{s_1^1,w_1}$ and replaced them with the blocks $B_{r_0^1,x_1}$ and $B^\prime_{s_0^1,x_1}$ in our working sets of blocks). The conditions of Lemma~\ref{lem:genblocks} still hold and so the blocks in our working sets of blocks are all distinct and different from $B_{Q,V}$ and $B_{Q^\prime,V^\prime}$. Redefine the paths:
\begin{itemize}
\item $\pi_0^1$ as $B_{Q,V},r_1^1,B_{Q^\prime,V^\prime}$
\item $\pi_1^1$ as $B_{Q,V},r_0^1,B_{r_0^1,x_1},x_1, B^\prime_{s_0^1,x_1},s_0^1,B_{Q^\prime,V^\prime}$.
\end{itemize}
The paths from the resulting set of $k$ $\pi$-paths are now pairwise internally-disjoint and each has length at most $6$.

Let $r_0^2$ (resp. $s_0^2$) be the unique node of $G_0^2$ (resp. $H_0^2$) that is adjacent to $B_{Q,V}$ (resp. $B_{Q^\prime,V^\prime}$) in $H$. Suppose that $r_0^2= s_0^2$. In this case, we build the path $\pi_1^2$ defined as $B_{Q,V},r_0^2,B_{Q^\prime,V^\prime}$. This path is clearly internally-disjoint from all of the $k$ $\pi$-paths constructed above. Alternatively, suppose that $r_0^2\neq s_0^2$. If $k\geq 3$ then there is a node $x_1^1\in G_1^1\setminus\{r_1^1,s_1^1\}$. Let $B_{r_0^2,x_1^1}$ be the block of $T_Q$ within $H$ generated by $r_0^2$ and $x_1^1$, and let $B_{s_0^2,x_1^1}$ be the block of $T_{Q^\prime}$ within $H$ generated by $s_0^2$ and $x_1^1$. By Lemma~\ref{lem:genblocks}, these blocks are different from $B_{Q,V}$, $B_{Q^\prime,V^\prime}$ and all other blocks used within the $k$ $\pi$-paths constructed above (even when we make the amendments to our working sets of blocks as detailed in the preceding paragraph). Define the path $\pi_0^2$ as $B_{Q,V},r_0^2,B_{r_0^2,x_1^1},x_1^1, B_{s_0^2,x_1^1},s_0^2,B_{Q^\prime,V^\prime}$. This path has length $6$ and is clearly internally-disjoint from all of the $k$ $\pi$-paths constructed above. 

On the other hand, suppose that $k=2$; so, $\Delta=3$. In particular, a $[3,2]$-transversal design exists. We deal with this case from scratch.

\begin{lemma}
There is exactly one $[3,2]$-transversal design up to isomorphism and this is the transversal design depicted in Fig.~\refstepcounter{fig}\thefig\label{32trans}.
\end{lemma}

\begin{figure}[t]
\centering
\scalebox{0.68}[0.68]{
\includegraphics{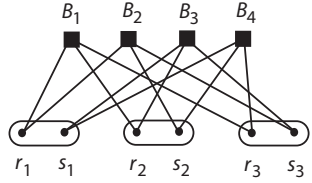}}
\caption{The unique $[3,2]$-transversal design.}
\end{figure}

\begin{proof}
In some $[3,2]$-transversal design, let the set of blocks be $\{B_1,B_2,B_3,B_4\}$ and let the group of nodes $G_i$ be $\{r_i,s_i\}$, for $i=1,2,3$. W.l.o.g., we must have the set of edges $$\{(r_1,B_1),(r_1,B_2),(s_1,B_3),(s_1,B_4),(r_2,B_1),(r_2,B_3),(s_2,B_2),(s_2,B_4)\}.$$ W.l.o.g., the node $B_4$ is adjacent to $r_3$, and $r_3$ is adjacent to one other block. The only possible block that $r_3$ can be adjacent to is $B_1$ (as otherwise we would have two nodes in different groups adjacent to $2$ distinct blocks).\end{proof}

Name the blocks and nodes of $T_Q$ as in Fig.~\ref{32trans}. W.l.o.g. suppose that $B_{Q,V}=B_1$ (it is easy to see that there is an automorphism of $T_Q$ mapping any block to any other block). 
There are two cases to consider: when $Q$ and $Q^\prime$ have $3$ common neighbours in $H_0$; and when they have only $2$ common neighbours in $H_0$. However, before we deal with these cases, choose any $3$ distinct nodes in $T_Q$. A tedious case-by-case analysis yields that no matter which $3$ nodes are chosen, there are $3$ pairwise internally-disjoint paths from $B_1$ to the $3$ nodes within $T_Q$. For example, suppose that the $3$ chosen nodes are $r_1$, $s_2$ and $s_3$. The $3$ paths are: $B_1,r_1$; $B_1,r_3,B_4,s_2$; and $B_1,r_2,B_3,s_3$. It turns out that if the $3$ chosen nodes are in $3$ different groups then the length of any path is at most $5$, whereas if the $3$ chosen nodes are in $2$ different groups then the length of any path is at most $3$.

Suppose that $Q$ and $Q^\prime$ have $3$ common neighbours in $H_0$. Choose the $3$ nodes in $T_Q$ as the neighbours of $B_{Q^\prime,V^\prime}$ in $T_{Q^\prime}$. Consequently, from above, we clearly obtain $3$ pairwise internally-disjoint paths from $B_{Q,V}$ to $B_{Q^\prime,V^\prime}$ as required. Moreover, each path has length at most $6$.

Suppose that $Q$ and $Q^\prime$ have only $2$ common neighbours in $H_0$ where the groups corresponding to these neighbours are $\{r_i,s_i\}$ and $\{r_j,s_j\}$, with $i,j\in\{1,2,3\}$, $i\neq j$. Choose $3$ nodes in $T_{Q^\prime}$ as the two neighbours of $B_{Q^\prime,V^\prime}$ in $\{r_i,s_i\}$ and $\{r_j,s_j\}$, call them $x_i$ and $x_j$, plus one other node, call it $x$, say, from one of these groups, with the remaining unchosen node from these two groups being denoted by $y$. By above, there are $3$ pairwise internally-disjoint paths, $\pi^\prime_1$, $\pi^\prime_2$ and $\pi^\prime_3$, in $T_{Q^\prime}$ from $B_{Q^\prime,V^\prime}$ to $x_i$, $x_j$ and $x$, and we may assume that these paths do not involve $y$; for if one does then it must be the path to $x$, in which case we simply choose $y$ as our third chosen node, above, instead of $x$ (it is not difficult to see that the path to $x$ has length at most $3$). By above, there are also $3$ pairwise internally-disjoint paths, $\pi_1$, $\pi_2$ and $\pi_3$, in $T_Q$ from $B_1$ to $x_i$, $x_j$ and $x$, each of length at most $3$; moreover, these paths do not share any nodes with the paths $\pi^\prime_1$, $\pi^\prime_2$ and $\pi^\prime_3$ apart from the end-nodes. Consequently, we clearly obtain $3$ pairwise internally-disjoint paths from $B_{Q,V}$ to $B_{Q^\prime,V^\prime}$ as required. Moreover, each path has length at most $6$.\smallskip

\noindent\underline{Case (\emph{a\/})(\emph{ii\/})}: Suppose that: $\Delta = k+1$; $\lambda\geq 2$; and there is exactly one common neighbour in $H_0$ of $Q$ and $Q^\prime$, namely the node $p_1$.\smallskip

\noindent As $\lambda\geq 2$, there is a path in $H_0$ of the form $Q,q_1,Q_1,q_2,Q_2,\ldots,Q_{m-1},q_{m},Q^\prime$ where $m\leq \frac{\mu}{2}$ and where $p_1$ does not appear on this path (note that $q_1\neq q_m$). We `batch' our groups similarly to as we did before:
\begin{itemize}
\item define $G_0^1=G_{p_1}=H_0^1$, $G_0^2=G_{q_1}$ and $H_0^2=G_{q_m}$
\item the remaining $k-1$ groups within $T_Q$ are $G_1^1,G_2^1,\ldots,G_{k-1}^1$ and the remaining $k-1$ groups in $T_{Q^\prime}$ are $H_1^1,H_2^1,\ldots,H_{k-1}^1$.
\end{itemize}
Note that we necessarily have that the groups $G_j^1$ and $H_j^1$ are distinct, for $j\in \{1,2,\ldots,k-1\}$ (as are the groups $G_0^2$ and $H_0^2$).

For each $j\in\{0,1,\ldots,k-1\}$, let $r_j^1\in G_j^1$ (resp. $s_j^1\in H_j^1$) be the unique node of $G_j^1$ (resp. $H_j^1$) that is adjacent to $B_{Q,V}$ (resp. $B_{Q^\prime,V^\prime}$) in $H$. Also, let $r_0^2 \in G_0^2$ (resp. $s_0^2 \in H_0^2$) be the unique node of $G_0^2$ (resp. $H_0^2$) that is adjacent to $B_{Q,V}$ (resp. $B_{Q^\prime,V^\prime}$) in $H$.

We construct the paths $\pi_j^1$, for $j=0,1,\ldots,k-1$, exactly as we did in Case (\emph{a\/})(\emph{i\/}). In addition, define the paths $\eta_0^2$ as $B_{Q,V},r_0^2$ and $\nu_0^2$ as $B_{Q^\prime,V^\prime},s_0^2$. If we can find a path in $H$ from $r_0^2$ to $s_0^2$ so that no node or block of this path, apart from the nodes and blocks of $\eta_0^2$ and $\nu_0^2$, lies in $T_Q$ or $T_{Q^\prime}$ then we are done. In $H$: there is a path of length $2$ lying entirely within $T_{Q_1}$ so that the source is $r_0^2$ and the destination is some node $y_2 \in G_{q_2}$; there is a path of length $2$ lying entirely within $T_{Q_2}$ so that the source is $y_2 \in G_{q_2}$ and the destination is some node $y_3 \in G_{q_3}$; $\ldots$; there is a path of length $2$ lying entirely within $T_{Q_{m-1}}$ so that the source is $y_{m-1}\in G_{q_{m-1}}$ and the destination is $s_0^2 \in H_0^2$. We clearly have a required path of length at most $\mu$. So, we have constructed $\Delta=k+1$ pairwise internally-disjoint paths from $B_{Q,V}$ to $B_{Q^\prime,V^\prime}$ so that $k$ of these paths have length at most $6$ and the remaining path has length at most $\mu$.\smallskip

\noindent\underline{Case (\emph{a\/})(\emph{iii\/})}: Suppose that: $\Delta = k+1$; $\lambda\geq 2$; and there are no common neighbours in $H_0$ of $Q$ and $Q^\prime$.\smallskip

\noindent As $\lambda\geq 2$, there are paths in $H_0$ of the form $Q,q_1,Q_1,q_2,Q_2,\ldots,Q_{a-1},q_a,Q^\prime$ and $Q,p_1,P_1,p_2,P_2,\ldots,P_{b-1},p_b,Q^\prime$ where $a,b\leq \frac{\mu}{2}$ and where these paths are internally-disjoint.

We `batch' our groups similarly to as we did before:
\begin{itemize}
\item define $G_0^1=G_{q_1}$ and $H_0^1=G_{q_a}$
\item choose $k-1$ groups within $T_Q$ (different from $G_0^1$) as $G_1^1,G_2^1,\ldots,G_{k-1}^1$ and choose $k-1$ groups in $T_{Q^\prime}$ (different from $H_0^1$) as $H_1^1,H_2^1,\ldots,H_{k-1}^1$.
\end{itemize}
Note that we necessarily have that the groups $G_j^1$ and $H_j^1$ are distinct, for $j\in \{0,1,\ldots,k-1\}$.

For each $j\in\{0,1,\ldots,k-1\}$, let $r_j^1\in G_j^1$ (resp. $s_j^1\in H_j^1$) be the unique node of $G_j^1$ (resp. $H_j^1$) that is adjacent to $B_{Q,V}$ (resp. $B_{Q^\prime,V^\prime}$) in $H$. Let $G_0^1=\{r_0^1,t_1,\ldots,t_{k-1}\}$ and $H_0^1=\{s_0^1,w_1,\ldots,w_{k-1}\}$. For each $j\in\{1,2,\ldots,k-1\}$, let $B_{r_j^1,t_j}$ be the block of $T_Q$ generated by $r_j^1\in G_j^1$ and $t_j\in G_0^1$, and let $B^\prime_{s_j^1,w_j}$ be the block of $T_{Q^\prime}$ generated by $s_j^1\in H_j^1$ and $w_j\in H_0^1$. By Lemma~\ref{lem:genblocks} applied twice to both $T_Q$ and $T_{Q^\prime}$, all blocks of 
$\{B_{r_j^1,t_j} : j=1,2,\ldots,k-1\}$
are distinct and different from $B_{Q,V}$, and all blocks of $\{B^\prime_{s_j^1,w_j} : j=1,2,\ldots,k-1\}$
are distinct and different from $B_{Q^\prime,V^\prime}$.

For $j\in\{1,2,\ldots,k-1\}$, let $\eta_j^1$ be the path $B_{Q,V},r_j^1,B_{r_j^1,t_j},t_j$ and let $\nu^1_j$ be the path $B_{Q^\prime,V^\prime},s_j^1,B^\prime_{s_j^1,w_j}, w_j$. Define the path $\eta_0^1$ as $B_{Q,V},r_0^1$ and the path $\nu_0^1$ as $B_{Q^\prime,V^\prime},s_0^1$. In $H$: there are $k$ paths of length $2$ from the nodes $r_0^1,t_1,\ldots,t_{k-1}$ of $G_{q_1}$ to distinct nodes $y_0^2, y_1^2,\ldots,y_{k-1}^2$ of $G_{q_2}$, respectively, so that all blocks on these paths lie in $T_{Q_1}$ and are distinct; there are $k$ paths of length $2$ from the nodes $y_0^2, y_1^2,\ldots,y_{k-1}^2$ of $G_{q_2}$ to distinct nodes $y_0^3, y_1^3,\ldots,y_{k-1}^3$ of $G_{q_3}$, respectively, so that all blocks on these paths lie in $T_{Q_2}$ and are distinct; $\ldots$; and there are $k$ paths of length $2$ from $y_0^{a-1}, y_1^{a-1},\ldots,y_{k-1}^{a-1}$ to the nodes $s_0^1,w_2,\ldots,w_{k-1}$ of $G_{q_a}$, respectively, so that all blocks on these paths lie in $T_{Q_{m-1}}$ and are distinct. We can clearly piece all of the paths together to obtain $k$ pairwise internally-disjoint paths from $B_{Q,V}$ to $B_{Q^\prime,V^\prime}$ so that each path has length at most $\mu+4$.

We can build another path from $B_{Q,V}$ to $B_{Q^\prime,V^\prime}$ that is internally-disjoint from the $k$ paths just constructed by proceeding exactly as we did above or in Case (\emph{a\/})(\emph{ii\/}), corresponding to the alternative path from $Q$ to $Q^\prime$ in $H_0$. This path has length at most $\mu$.\smallskip

\noindent\noindent\underline{Case (\emph{a\/})(\emph{iv\/})}: Suppose that $\lambda=1$ or $\Delta\leq k$.\smallskip

\noindent By choosing the appropriate construction from the cases above, depending upon whether there is a common neighbour of $Q$ and $Q^\prime$ in $H_0$, we can clearly construct $\min\{\Delta,k\}$ pairwise internally-disjoint paths from $B_{Q,V}$ to $B_{Q^\prime,V^\prime}$ so that: if there is a common neighbour of $Q$ and $Q^\prime$ in $H_0$, all paths have length at most $6$; and if there is no common neighbour of $Q$ and $Q^\prime$ in $H$, all paths have length at most $\mu+4$.\smallskip

\noindent\underline{Case (\emph{b\/})}: Consider the case when our two blocks are $B_{Q,V}$ and $B_{Q,V^\prime}$. Suppose that the block $Q$ of $H_0$ is adjacent to the nodes $p_1,p_2,\ldots,p_\Delta$. For each $i\in\{1,2,\ldots,\Delta\}$, let $r_i\in G_{p_i}$ be adjacent to $B_{Q,V}$ in $H$ and let $s_i\in G_{p_i}$ be adjacent to $B_{Q,V^\prime}$ in $H$. W.l.o.g. suppose that $r_i\neq s_i$, for $i=1,2,\ldots,b$, and that $r_i=s_i$, for $i=b+1,b+2,\ldots,\Delta$. 

Suppose that $b\geq 2$. For each $i\in\{1,2,\ldots,b-1\}$, let $B_{r_i,s_{i+1}}$ be the block of $T_Q$ that is generated by $r_i$ and $s_{i+1}$, and let $B_{r_b,s_1}$ be the block of $T_Q$ that is generated by $r_b$ and $s_{1}$. By Lemma~\ref{lem:genblocks}, all blocks $B_{r_1,s_2},B_{r_2,s_3},\ldots,B_{r_{b-1},s_b},B_{r_b,s_1}$ are distinct and different from $B_{Q,V}$ and $B_{Q,V^\prime}$. Hence: if $\pi_i$ is the path $B_{Q,V},r_i,B_{r_i,s_{i+1}},s_{i+1},B_{Q,V^\prime}$, for $i\in\{1,2,\ldots,b-1\}$; if $\pi_b$ is the path $B_{Q,V},r_b,$\linebreak$B_{r_b,s_1},s_1,B_{Q,V^\prime}$; and if $\pi_i$ is the path $B_{Q,V},r_i,B_{Q,V^\prime}$, for $i\in\{b+1,b+2,\ldots,\Delta\}$, then paths in the resulting set are pairwise internally-disjoint, with each path having length at most $4$.

If $b=0$ then the above construction trivially yields $\Delta$ paths of length $2$ from $B_{Q,V}$ to $B_{Q,V^\prime}$. Suppose that $b=1$. Choose $x_2\in G_{p_2}\setminus\{r_2\}$ and let $B_{r_1,x_2}$ (resp. $B_{s_1,x_2}$) be the block of $T_Q$ generated by $r_1$ and $x_2$ (resp. $s_1$ and $x_2$). Clearly, $B_{r_1,x_2}$, $B_{s_1,x_2}$, $B_{Q,V}$ and $B_{Q,V^\prime}$ are all distinct. So, if $\pi_1$ is the path $B_{Q,V},r_1,B_{r_1,x_2},x_2,B_{s_1,x_2},s_1, B_{Q,V^\prime}$ and $\pi_i$ is the path $B_{Q,V},r_i,B_{Q,V^\prime}$, for $i\in\{2,3,\ldots,\Delta\}$, then we obtain $\Delta$ pairwise internally-disjoint paths, with all paths having length $2$ except one which has length $6$.\end{proof}

Theorem~\ref{thm:main} is clearly optimal in the sense that the maximal number of pairwise internally-disjoint paths is always constructed (this follows from a simple application of Menger's Theorem). Also, irrespective of the erroneous proofs in \cite{QFZ15}, Theorem~\ref{thm:main}(\emph{b\/}) extends the claimed results in \cite{QFZ15} by deriving $\Delta$ pairwise internally-disjoint paths from any block $B_{Q,V}$ in $H$ to any block $B_{Q,V^\prime}$ (this scenario was not dealt with in \cite{QFZ15}). Note also that the chance to obtain more than $\min\{\Delta,k\}$ pairwise internally-disjoint paths comes about when we force $\Delta=k+1$ and choose a $[k+1,k]$-transversal design (if one exists).

Of course, Theorem~\ref{thm:main} yields path diversity in any DCN constructed using the $3$-step method with Methods $A$ and $B$. Suppose that Method $A$ has been used to construct a DCN where the number of server-nodes adjacent to some level-$1$ switch-node is at most the number of level-$2$ switch-nodes adjacent to the level-$1$ switch-node. If all level-$1$ switch-nodes are non-blocking then we can simultaneously facilitate data transfers from all the server-nodes adjacent to some level-$1$ switch-node to all the server-nodes adjacent to any other level-$1$ switch-node (in fact, we need only that the source and destination level-$1$ switch-nodes are non-blocking; all other level-$1$ switch-nodes can be blocking).

\subsection{Applying our construction}\label{sec:applyingcons}

In this section, we apply Theorem~\ref{thm:main} and p[rovide ome concrete illustrations of how we can obtain switch-centric DCNs that have the same diameter as Fat-Tree yet have more server-nodes and significant one-to-one path diversity. 

The primary difficulty in the proof of Theorem~\ref{thm:main} is in dealing with when the $[\Delta,k]$-transversal design is such that $\Delta=k+1$ (recall, $k,\Delta,d\geq 2$). However, dealing with this difficulty is worth it as having the capability to use $[k+1,k]$-transversal designs when applying the construction means that we obtain more flexibility as to the number of switch ports necessarily required in the resulting DCNs, as we illustrate now. In what follows, we limit ourselves (on the grounds of practicality) to switch-nodes with at most $128$ ports. If we were only to use $[\Delta,k]$-transversal designs where $(\Delta,k)\in\{(3,3),(4,4),(5,5),(7,7),(8,8),(9,9),$\linebreak$(11,11)\}$ (note that each of these $[\Delta,k]$-transversal designs exists; see Section~\ref{sec:tdesigns}) in the (one-iteration) $3$-step method then (assuming that we use bipartite graphs $H_0$ that have the same number of nodes as blocks; that is, for which $d=\Delta$) we need level-$2$ switch-nodes with $9$, $16$, $25$, $36$, $49$, $64$, $81$, $100$ or $121$ ports. If we allow $[\Delta,k]$-transversal designs where $(\Delta,k)\in\{(3,2),(4,3),(5,4),(6,5),(8,7),(9,8),(10,9)\}$ (again, note that each of these $[\Delta,k]$-transversal designs exists; see Section~\ref{sec:tdesigns}) then we have added flexibility in that we can also build DCNs with level-$2$ switch-nodes with $6$, $12$, $20$, $30$, $56$, $72$ or $90$ ports; of course, to ensure that we obtain full path diversity, we need that $H_0$ has at least $2$ internally-disjoint paths joining any two distinct blocks.

As regards finding large, regular, uniform bipartite graphs of line-diameter $4$ and so that there are at least $2$ internally-disjoint paths joining any two distinct blocks, this is not as straightforward as it is if we drop the second stipulation. There is an extensive literature as regards the construction of regular, uniform bipartite graphs of a given degree and where the degree is equal to the rank (see, for example, \cite{MS13}) but in so far as we are aware, the construction of such graphs with any added stipulations (relating to connectivity, for example)  has not been considered. Nevertheless, there are simple constructions that enable us to apply Theorem~\ref{thm:main} to the full, as we now illustrate. 

From \cite{MS13}, there is a regular, uniform bipartite graph of degree and rank $7$ with $173$ nodes and $173$ blocks, and which has graph-theoretic diameter $4$. Enumerate the nodes as $n_1,n_2,\ldots,n_{173}$ and the blocks as $b_1,b_2,\ldots,b_{173}$. Take two disjoint copies of this graph and add $346$ edges joining $n_i$ in one graph to $b_i$ in the other graph; moreover, the nodes (resp. blocks) of the new bipartite graph are exactly the nodes (resp. blocks) of the original disjoint copies. The resulting graph is a regular, uniform bipartite graph of degree and rank $8$ with graph-theoretic diameter at most $5$; furthermore, there are clearly at least $2$ internally-disjoint paths joining any pair of distinct blocks or any pair of distinct nodes where these paths have length at most $6$. Take this bipartite graph as $H_0$.

Apply the $3$-step method using a $[8,7]$-transversal design. This results in a bipartite graph with $16,954$ nodes, $2,442$ blocks, degree $8$, and rank $56$. Now apply Method $A$ with $c = 4$ and we obtain a DCN of diameter $6$ and with $406,896$ server-nodes, $16,954$ level-$1$ switch-nodes, $9,688$ level-$2$ switch-nodes and so that all switch-nodes have $56$ ports. By Theorem~\ref{thm:main}, there are paths from the $24$ server-nodes adjacent to the same level-$1$ switch-node $X$ to the $24$ server-nodes adjacent to another level-$1$ switch-node $Y$ so that the only switch-nodes that lie on more than one of these paths are $X$ and $Y$ and so that the length of each of these paths is at most $10$. In addition, we have spare capacity at the level-$1$ switch-nodes $X$ and $Y$ as $8$ links to level-$2$ switch-nodes are not used.

Alternatively (for an increase in the number of server-nodes incorporated and in path diversity but so that more ports are required on switch-nodes), apply the $3$-step method using a $[8,8]$-transversal design. This results in a bipartite graph with $22,144$ nodes, $2,768$ blocks, degree $8$, and rank $64$. Now apply Method $A$ with $c = 4$ and we obtain a DCN of diameter $6$ and with $708,608$ server-nodes, $22,144$ level-$1$ switch-nodes, $11,072$ level-$2$ switch-nodes and so that all switch-nodes have $64$ ports. By Theorem~\ref{thm:main}, there are paths from the $32$ server-nodes adjacent to the same level-$1$ switch-node $X$ to the $32$ server-nodes adjacent to another level-$1$ switch-node $Y$ so that the only switch-nodes that lie on more than one of these paths are $X$ and $Y$ and so that the length of each of these paths is at most $10$. The actual construction used will be dominated by the available hardware; that is, numbers of server-nodes and switch-nodes and the radix of switch-nodes.

Undertaking more iterations of the $2$-step construction before building our DCNs yields that if we use $[\Delta,k]$-transversal designs where $(\Delta,k)\in\{(3,3),(4,$\linebreak$4),(5,5),(7,7),(8,8),(9,9),(11,11)\}$ then we need level-$2$ switch-nodes with\linebreak$3.3.3=27$, $4.4.4=64$, $3.3.3.3=81$ or $5.5.5=125$ ports; and if we use $[\Delta,k]$-transversal designs where $(\Delta,k)\in\{(3,2),(4,3),(5,4),(6,5),(8,7),(9,8),$\linebreak$(10,9)\}$ then we need level-$2$ switch-nodes with an alternative range of port numbers. As an illustration, iterating the $2$-step method by mixing the use of $[3,3]$- and $[3,2]$-transversal designs, we can build DCNs where the level-$2$ switch-nodes need $6$, $9$, $12$, $18$, $24$, $27$, $36$, $48$, $54$, $72$, $81$, $96$ and $108$ ports. What is more, by Theorem~\ref{thm:main}, any bipartite graph built using the $2$-step method iterated more than once necessarily has maximal path diversity (as applying the $2$-step method once always yields a bipartite graph where there are at least $2$ internally-disjoint paths joining any two distinct blocks).

It has already been established in \cite{QFZ15} that the $2$-step and $3$-step methodologies are viable when it comes to building switch-centric DCNs that can host more server-nodes than a Fat-Tree and retain an acceptable level of (one-to-one) path diversity whilst maintaining a diameter of $6$; we further cement this viability in this paper. An important point to note is that we need not choose our bipartite graph $H_0$ to be as large as we can; as we have shown, smaller bipartite graphs might yield DCNs with a sufficiently large number of server-nodes and optimal one-to-one path diversity.

\section{One-to-many path diversity}\label{sec:onetomanypathdiv}

We now work towards building $\Delta$ pairwise edge-disjoint paths from any block in some bipartite graph $H$ built using the $2$-step method to the blocks of any given multi-set of $\Delta$ blocks (so, there might possibly be repeated blocks; here, $H$ and $\Delta$ are as in the statement of Theorem~\ref{thm:main} but where $\Delta\leq k$). Henceforth, when we write `set' we often mean `multi-set'. We begin by working only within some transversal design.

\begin{theorem}\label{thm:startpaths}
Let $T$ be any $[\Delta,k]$-transversal design where $k,\Delta\geq 2$ and where $\Delta\leq k$. Let $U$ be any block and let $t_1,t_2,\ldots,t_\Delta$ be any $\Delta$ nodes or blocks, called target-nodes or target-blocks, as appropriate, where there may be repetitions. For each $i=1,2,\ldots,\Delta$, there is a path $\pi_i$ from $U$ to $t_i$ of length at most $7$ so that these paths are pairwise edge-disjoint.
\end{theorem}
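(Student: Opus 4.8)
The plan is to build each $\pi_i$ from two pieces: a first edge leaving $U$ towards one of its $\Delta$ neighbours, followed by a short route inside $T$ to $t_i$. Write $r_1\in D_1,\ldots,r_\Delta\in D_\Delta$ for the neighbours of $U$ and put $R=\{r_1,\ldots,r_\Delta\}$. If $t_i=U$ the path is the trivial one and uses no edge; for every other $i$, $\pi_i$ will leave $U$ through a distinct one of the edges $\{U,r_1\},\ldots,\{U,r_\Delta\}$, so edge-disjointness \emph{at the source} is automatic. The two things to arrange are (i) which neighbour $r_j$ each path leaves through, and (ii) a route from $r_j$ to $t_i$ of length at most $7$ that does not re-use an edge.

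For (i) I would assign source edges greedily. Whenever a target $t_i$ is a node $r_m\in R$, that path would rather leave through $r_m$ and stop immediately, giving the length-$1$ path $U,r_m$; I grant one such path for each distinct value $r_m$ that occurs among the targets, and extend arbitrarily to a bijection between the remaining non-trivial paths and the remaining $r_j$'s (there is room since there are only $\Delta$ paths in all). This works because every other kind of target is reachable cheaply from \emph{any} $r_j$: if $t_i$ is a block $B$ adjacent to $r_j$, then $U,r_j,B$ has length $2$; if $t_i=B$ is not adjacent to $r_j$, then since two distinct blocks have at most one common neighbour there is a group $D_{j'}\neq D_j$ on which $B$ and $U$ disagree, and $U,r_j,C,b_{j'},B$ has length $4$, where $b_{j'}$ is the neighbour of $B$ in $D_{j'}$ and $C$ is the block generated by $r_j$ and $b_{j'}$ (here $C\neq U$ as $b_{j'}\neq r_{j'}$, and $C\neq B$ as $r_j\neq b_j$); if $t_i$ is a node $v\notin R$ in a group other than $D_j$, then $U,r_j,C',v$ has length $3$ with $C'$ generated by $r_j$ and $v$; if $v\in D_j\setminus\{r_j\}$, or if $t_i=r_m$ with $m\neq j$, one inserts a detour through a node $w\notin R$ lying in a third group, giving length $5$; and in the single awkward case $\Delta=2$, $t_i=r_m$, $m\neq j$ (so there is no third group), one uses $U,r_j,\{r_j,x\},x,\{x,z\},z,\{z,r_m\},r_m$ with $x\in D_m\setminus\{r_m\}$ and $z\in D_j\setminus\{r_j\}$, of length $7$ — this is the only place where the bound $7$, rather than $5$ or $6$, is actually needed. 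Lemma~\ref{lem:genblocks}, together with the elementary observation that a block meeting two nodes from distinct groups is uniquely determined, ensures the intermediate blocks are distinct from $U$ and, within a single path, from one another.

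The main obstacle is cross-path edge-disjointness, and I would dispose of it in two layers. At the \emph{far} end, the $t_i$ may repeat, so several paths can hit the same target; they must then enter it through distinct edges, which is possible because a block has degree $\Delta$ and a node has degree $k\geq\Delta$, but it forces a coordinated choice — for a block target $B$ one selects a system of distinct approach-neighbours $b_{j'}$ for these paths, chosen fixed-point-free relative to their source indices so that the length-$4$ routes remain available (only when $\Delta=2$, or when $B$ happens to share a neighbour with $U$, might one approach be forced to run from $r_j$ to $b_j$ inside a group, at length $6$). At the \emph{interior}, one checks that the generated blocks and detour nodes of distinct paths do not clash: a block generated by $r_j$ and some other node can never equal one generated by $r_{j'}$ and some other node when $j\neq j'$ (it would have to be $U$), which eliminates almost every potential collision, and the only remaining slack — the detour nodes $w,x,z$, each ranging over a group of size $k\geq\Delta$ — is precisely enough to avoid the bounded list of forbidden values contributed by the other paths. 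Checking that $k\geq\Delta$ genuinely leaves an admissible choice in every configuration, and carefully handling the degenerate case $\Delta=2$ (where $T$ is merely a subdivided $K_{k,k}$ and there is almost no room to manoeuvre), is where the real work lies.
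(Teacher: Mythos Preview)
Your plan is genuinely different from the paper's, and it is a reasonable outline, but the part you yourself flag as ``where the real work lies'' is not just bookkeeping: as written, your main disjointness tool does not cover all the intermediate blocks you create. Your key observation --- a block generated by $r_j$ and a non-root node can never coincide with one generated by $r_{j'}$ and a non-root node for $j\neq j'$ --- is correct, but several of your detour blocks are \emph{not} of that shape. In the length-$5$ route $U,r_j,C_1,w,C_2,v$ with $v=r_m$, the block $C_2$ is generated by $w$ and the \emph{root} $r_m$; if the path assigned source $r_m$ also generates a block through $r_m$ and some node $u$, you need $u\neq w$, which your argument never secures. The same issue arises for the block $\{z,r_m\}$ in your length-$7$ path. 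So ``precisely enough slack'' is a hope, not a fact: you would need a global scheme for choosing all the $w,x,z$ simultaneously, and that coordination problem is exactly as hard as the theorem.

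The paper sidesteps this by a different decomposition. It first treats the case where all targets are nodes, and there it does \emph{not} assign source edges in advance; instead it ranks the groups by how many non-rooted targets they hold and builds a single global matching of root-nodes to target-nodes, applying Lemma~\ref{lem:genblocks} once to the whole matching. This yields pairwise \emph{internally}-disjoint (not merely edge-disjoint) paths in the all-nodes case. The general case with target-blocks is then reduced to the all-nodes case by replacing each target-block with a suitably chosen neighbouring node and afterwards extending or truncating by one edge. What this buys is that the collision analysis is done once, on a single matching with a uniform structure, rather than pairwise across individually-routed paths; your approach, by contrast, would need a bespoke avoidance argument for every pair of path templates.
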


\begin{proof}
For each group of nodes $D_j$ within $T$, where $j\in\{1,2,\ldots,\Delta\}$, let $r_j$ be the (unique) node of $D_j$ adjacent to the block $U$; we call the nodes $r_1,r_2,\ldots,r_\Delta$ \emph{root-nodes\/}. Consider some group $D_j$. There may be target-nodes that are identical to the root-node $r_j$; call these target-nodes \emph{rooted\/}, with the remaining target-nodes in $D_j$ called \emph{non-rooted\/}. Call the number of rooted target-nodes in $D_j$ the \emph{multiplicity\/} of the root-node $r_j$.

There are two essential cases: (\emph{a\/}) we have $\Delta$ target-nodes and no target-blocks; (\emph{b\/}) we have at least $1$ target-block.\smallskip

\noindent\underline{Case (\emph{a\/})}: Suppose that we have $\Delta$ target-nodes and no target-blocks.\smallskip

\noindent We rank the groups of $T$ as $D_{n_1},D_{n_2},\ldots,D_{n_\Delta}$ in decreasing order of the number of occurrences of non-rooted target nodes within the group, with ties broken according to decreasing multiplicity of the root-nodes (and then arbitrarily). We attempt to match the non-rooted target-nodes in $D_{n_1}$ with the root-nodes $r_{n_2},r_{n_3},\ldots,r_{n_\Delta}$ in this order but only if the root-node has multiplicity $0$ (that is, we skip over root-nodes of non-zero multiplicity; note that any skipped root-node is identical to at least $1$ target-node). If we are successful then we attempt to match the non-rooted target-nodes in $D_{n_2}$ by continuing down our list of root-nodes (again, skipping over root-nodes of non-zero multiplicity). If we are successful then we attempt to match the non-rooted target-nodes in $D_{n_3}$, and so on. There are three possibilities.
\begin{itemize}
\item[(\emph{i\/})] We successfully match every non-rooted target-node without running out of root-nodes (of multiplicity $0$). This happens when $r_{n_1}$ has non-zero multiplicity or when there is a root-node with multiplicity at least $2$.

\item[(\emph{ii\/})] We successfully match all but one of the non-rooted target-nodes and the final non-rooted target-node does not lie in $D_{n_1}$, in which case we match this target-node with $r_{n_1}$. This happens when $r_{n_1}$ has multiplicity $0$, every root-node has multiplicity at most $1$, and there is a non-rooted target-node that does not lie in $D_{n_1}$.

\item[(\emph{iii\/})] We have one non-rooted target-node of $D_{n_1}$ remaining to be matched and also the root-node $r_{n_1}$ unmatched. This happens when all of the non-rooted target-nodes lie in $D_{n_1}$ and $r_{n_1}$ has multiplicity $0$.
\end{itemize}

Consider Sub-case (\emph{ii\/}). We extend our matching so that every root-node of multiplicity $1$ is matched with the unique target-node that is identical to it. We have a complete matching of root-nodes to target-nodes so that no target-node is matched with the root-node in its own group unless the target-node is (the unique target-node) identical to the root-node. For every pair $(r,t)$ where $r$ is a root-node matched with a target-node $t$ and so that $r$ and $t$ do not lie in the same group, let $U_{r,t}$ be the block generated by $r$ and $t$. Call the resulting set of blocks the $U$-blocks. By Lemma~\ref{lem:genblocks}, all of the $U$-blocks are distinct and different from $U$. If $U_{r,t}$ is a $U$-block then define the path $\pi_r$ as $U,r,U_{r,t},t$; and if the target-node $t$ is identical to the root-node $r$ then define the path $\pi_r$ as $U,t$. The resulting $\Delta$ paths are pairwise internally-disjoint.

Consider Sub-case (\emph{iii\/}).  We have an almost complete matching of root-nodes to target-nodes so that no target-node is matched with the root-node in its own group, except that some target-node $t^\prime$ of $D_{n_1}$ is not matched and nor is the root-node $r_{n_1}$. As we did above, we generate a set of $U$-blocks, one for each matched-pair. Again, these $U$-blocks are all distinct and different from $U$, and by proceeding as above we obtain $\Delta-1$ pairwise internally-disjoint paths from $U$ to target-nodes.

Consider $t^\prime$ and $r_{n_1}$. As $\Delta\geq 2$, there is some node $x$ in the group $D_{n_2}$ that is neither a root-node nor a target-node. Let $U^\prime_{r_{n_1},x}$ (resp. $U^\prime_{x,t^\prime}$) be the block generated by $r_{n_1}$ and $x$ (resp. $x$ and $t^\prime$). By Lemma~\ref{lem:genblocks}, $U^\prime_{r_{n_1},x}$ is different from $U$ and every $U$-block; also, $U^\prime_{x,t^\prime}$ is different from $U$ and $U^\prime_{r_{n_1},x}$. However, it could be that $U^\prime_{x,t^\prime}$ is identical to some $U$-block (for this to happen we would need that $t^\prime$ is identical to some other target-node). If $t$ is the target-node of $D_{n_1}$ matched with $r_{n_2}$ then $U^\prime_{x,t^\prime}$ is different from $U_{r_{n_2},t}$. Hence, there are at most $\Delta-2$ $U$-blocks with which $U^\prime_{x,t^\prime}$ might be identical. As we have at least $\Delta-1$ choices for $x$ in $D_{n_2}$ (recall, $\Delta\leq k$), we can always choose $x$ so that $U^\prime_{x,t^\prime}$ is different from every $U$-block. Define the path $\pi_{r_{n_1}}$ as $U,r_{n_1},U^\prime_{r_{n_1},x},x, U^\prime_{x,t^\prime},t^\prime$. The resulting $\Delta$ paths from $U$ to the target-nodes are pairwise internally-disjoint.

Consider Sub-case (\emph{i\/}). We can extend our matching so that every root-node of non-zero multiplicity is matched with one target-node that is identical to it. Hence, we have a partial matching of root-nodes to target-nodes so that no target-node is matched with the root-node in its own group unless the target-node is identical to the root-node. As we did above, we generate a set of $U$-blocks, one for each matched-pair where the root-node in the pair is different from its matched target-node. Again, these $U$-blocks are all distinct and different from $U$, and we obtain pairwise internally-disjoint paths from $U$ to all of the target-nodes involved. We also obtain paths of length $1$ from $U$ to every target-node that is identical to a root-node and has been matched with it. If there are no root-nodes of multiplicity greater than $1$ then the resulting $\Delta$ paths are pairwise internally-disjoint and we are done. So, suppose that we have paths $\pi_1,\pi_2,\ldots,\pi_{\Delta-b}$ that are pairwise internally-disjoint and that there are $a\geq 1$ root-nodes of multiplicity at least $2$ with $b$ unmatched root-nodes (so, $b$ is the number of target-nodes remaining to be dealt with; of course, $b\geq a$). Note that any group in which some hitherto unmatched root-node lies, apart from $D_{n_1}$ if $r_{n_1}$ is still unmatched (that is, has multiplicity $0$), contains no target-nodes (because of the order in which we initially match target-nodes to root-nodes) and the groups containing unmatched root-nodes are either $D_{n_{\Delta-b+1}},D_{n_{\Delta-b+2}},\ldots,D_{n_{\Delta}}$, if $r_{n_1}$ is matched, or $D_{n_1},D_{n_{\Delta-b+2}},D_{n_{\Delta-b+3}},\ldots,D_{n_{\Delta}}$, if $r_{n_1}$ is unmatched.

Suppose that $b=1$; hence, there is exactly one root-node $r_c$, where $c\leq \Delta-1$, of multiplicity greater than $1$ and this multiplicity is $2$. W.l.o.g. let the solitary target-node remaining to be dealt with be $t_2$ (which is identical to both $r_c$ and some other target-node $t_1$), with the solitary root-node remaining to be dealt with being either $r_{n_\Delta}$ or $r_{n_{1}}$, as appropriate. If $\Delta=2$ then we must have $\{r_{n_1},x_{n_1}\}\subseteq D_{n_1}$ and $\{r_{n_2},x_{n_2}\}\subseteq D_{n_2}$ with $x_{n_1}\neq r_{n_1}$ and $x_{n_2}\neq r_{n_2}$ so that the two target nodes $t_1$ and $t_2$ are both equal to $r_{n_1}$ (note that in this case we define no $U$-blocks). Let $U^\prime_{x_{n_2},r_{n_1}}$ (resp. $U^\prime_{r_{n_2},x_{n_1}}$, $U^\prime_{x_{n_1},x_{n_2}}$) be the block generated by $x_{n_2}$ and $r_{n_1}$ (resp. $r_{n_2}$ and $x_{n_1}$, $x_{n_1}$ and $x_{n_2}$). The blocks $U$, $U^\prime_{x_{n_2},r_{n_1}}$, $U^\prime_{r_{n_2},x_{n_1}}$ and $U^\prime_{x_{n_1},x_{n_2}}$ are all distinct. Define the path $\pi_2$ as $U, r_{n_2},U^\prime_{r_{n_2},x_{n_1}},x_{n_1}, U^\prime_{x_{n_1},x_{n_2}}, x_{n_2}, U^\prime_{x_{n_2},r_{n_1}}, t_2$ and the path $\pi_1$ as $U,t_1$; the two paths are internally-disjoint and we are done.

Alternatively, suppose that $b=1$ and $\Delta\geq 3$ (and so $k\geq 3$). If $c=n_{\Delta-1}$ then there is a non-rooted target-node in each $D_j$, for $j\in\{n_1,n_2,\ldots,n_{\Delta-2}\}$, with the unmatched root-node being $r_{n_1}$. Choose $x\in D_\Delta\setminus\{r_{n_\Delta}\}$. Otherwise, if $c\neq n_{\Delta-1}$ then $D_{n_{\Delta-1}}$ contains at most $1$ target-node, which, if it exists, is rooted, with the unmatched root-node being either $r_{n_1}$ or $r_{n_\Delta}$. Choose $x\in D_{\Delta-1}\setminus\{r_{n_{\Delta-1}}\}$. Whichever is the case, let $r$ be the unmatched root-node (and so $r\in\{r_{n_1},r_{n_\Delta}\}$). Let $U^\prime_{x,t_2}$ (resp. $U^\prime_{r,x}$) be the block generated by $x$ and $t_2$ (resp. $r$ and $x$). By Lemma~\ref{lem:genblocks}, the $U$-blocks, $U$, $U^\prime_{x,t_2}$ and $U^\prime_{r,x}$ are all distinct. Define the path $\pi_\Delta$ as $U,r,U^\prime_{r,x},x,U^\prime_{x,t_2},t_2$ so as to obtain $\Delta$ pairwise internally-disjoint paths from $U$ to the target-nodes; hence, we are done.

Now suppose that $b\geq 2$ (note that $b\leq \Delta-1\leq k-1$). As stated above, the root-nodes remaining to be dealt with are either $r_{n_{\Delta - b +1}},r_{n_{\Delta - b +2}},\ldots,r_{n_\Delta}$ or $r_{n_1},r_{n_{\Delta - b +2}},r_{n_{\Delta - b +3}},\ldots,r_{n_\Delta}$. Suppose that the root-nodes remaining to be dealt with are $r_{n_{\Delta - b +1}},r_{n_{\Delta - b +2}},\ldots,r_{n_\Delta}$ and the target-nodes remaining to be dealt with are $t_1,t_2,\ldots,t_b$ (of course, every such target-node is identical to an already matched root-node). For each $i\in\{\Delta-b+1,\Delta-b+2,\ldots,\Delta\}$, let $D_{n_i}=\{r_{n_i},x_2^{n_i},x_3^{n_i},\ldots,x_k^{n_i}\}$ and choose $x_{n_i}^\prime\in D_{n_i}\setminus\{r_{n_i}\}$ (from our earlier remark, there are no target-nodes in $D_{n_i}$). For each $i\in\{\Delta-b+1,\Delta-b+2,\ldots,\Delta-1\}$, let $U^\prime_{r_{n_i},x_{n_{i+1}}^\prime}$ be the block generated by $r_{n_i}$ and $x_{n_{i+1}}^\prime$, and let $U^\prime_{r_{n_\Delta},x_{n_{\Delta-b+1}}^\prime}$ be the block generated by $r_{n_\Delta}$ and $x_{n_{\Delta-b+1}}^\prime$; call these blocks the $U^\prime$-blocks. By Lemma~\ref{lem:genblocks}, the $U^\prime$-blocks are distinct and each $U^\prime$-block is different from every $U$-block and $U$. For each $i\in\{\Delta-b+1,\Delta-b+2,\ldots,\Delta\}$, let $\bar{U}_{x_{n_i}^\prime,t_i}$ be the block generated by $x_{n_i}^\prime$ and $t_i$; call these blocks the $\bar{U}$-blocks. By Lemma~\ref{lem:genblocks}, each $\bar{U}$-block is different from $U$, every $U$-block and from every $U^\prime$-block (note that any $t_i$ is a root-node and so not adjacent to any $U$-block or $U^\prime$-block). However, it is possible that $\bar{U}_{x_{n_i}^\prime,t_i} = \bar{U}_{x_{n_j}^\prime,t_j}$, for $i\neq j$ (for this to happen we would need that $t_i=t_j$, as otherwise we would have two root-nodes adjacent to both $U$ and another block). Note that for each $i\in\{\Delta-b+1,\Delta-b+2,\ldots,\Delta\}$: we have $k-1$ possible choices within $D_{n_i}$ for $x_{n_i}^\prime$; and for $j_1,j_2\in\{2,3,\ldots,k\}$, where $j_1\neq j_2$, the block $\bar{U}_{x_{j_1}^{n_i},t_i}$, generated by $x_{j_1}^{n_i}$ and $t_i$, is different from the block $\bar{U}_{x_{j_2}^{n_i},t_i}$, generated by $x_{j_2}^{n_i}$ and $t_i$. 

Choose $x_{n_{\Delta-b+1}}^\prime=x_2^{n_{\Delta-b+1}}$ and $x_{n_{\Delta-b+2}}^\prime=x_2^{n_{\Delta-b+2}}$. Suppose we have that $\bar{U}_{x_{n_{\Delta-b+2}}^\prime,t_2} = \bar{U}_{x_{n_{\Delta-b+1}}^\prime,t_1}$; if so then re-choose $x_{n_{\Delta-b+2}}^\prime=x_3^{n_{\Delta-b+2}}$. Necessarily, $\bar{U}_{x_{n_{\Delta-b+2}}^\prime,t_2} \neq \bar{U}_{x_{n_{\Delta-b+1}}^\prime,t_1}$. Choose $x_{n_{\Delta-b+3}}^\prime=x_2^{n_{\Delta-b+3}}$. Suppose that $\bar{U}_{x_{n_{\Delta-b+3}}^\prime,t_3} = \bar{U}_{x_{n_{\Delta-b+1}}^\prime,t_1}$; if so then re-choose $x_{n_{\Delta-b+3}}^\prime=x_3^{n_{\Delta-b+3}}$. Suppose that $\bar{U}_{x_{n_{\Delta-b+3}}^\prime,t_3} = \bar{U}_{x_{n_{\Delta-b+2}}^\prime,t_2}$; if so then re-choose $x_{n_{\Delta-b+3}}^\prime=x_4^{n_{\Delta-b+3}}$. Necessarily, $\bar{U}_{x_{n_{\Delta-b+1}}^\prime,t_1},\bar{U}_{x_{n_{\Delta-b+2}}^\prime,t_2},\bar{U}_{x_{n_{\Delta-b+3}}^\prime,t_3}$ are distinct. Proceed in this way until $x_{n_{\Delta-b+2}}^\prime,x_{n_{\Delta-b+3}}^\prime, \ldots, x_{n_\Delta}^\prime$ have been chosen. Note that as $b\leq \Delta-1\leq k-1$, the above procedure can always be completed. What results is the set of distinct blocks $\{\bar{U}_{x_{n_{\Delta-b+i}}^\prime,t_i}: i=1,2,\ldots,b\}$. For each $i=1,2,\ldots,b-1$, define the path $\pi_{\Delta-b+i}$ as $U,r_{n_{\Delta-b+i}},U^\prime_{r_{n_{\Delta-b+i}},x_{n_{\Delta-b+i +1}}^\prime}, x_{n_{\Delta-b+i+1}}^\prime, \bar{U}_{x_{\Delta-b+i+1}^\prime, t_i},t_i$, and define the path $\pi_\Delta$ as $U,r_{n_\Delta},U^\prime_{r_{n_\Delta},x_{n_{\Delta-b+1}}^\prime}, x_{n_{\Delta-b+1}}^\prime, \bar{U}_{x_{n_{\Delta-b+1}}^\prime, t_b},t_b$. The resulting $\Delta$ paths $\pi_1,\pi_2,\ldots,\pi_\Delta$ from $U$ to the target-nodes are pairwise internally-disjoint.

Alternatively, suppose that the root-nodes remaining to be dealt with are $r_{n_1},r_{n_{\Delta - b +2}},r_{n_{\Delta - b +3}},\ldots,r_{n_\Delta}$. We proceed exactly as above except that we start from a node $x_{n_1}^\prime\in D_{n_1}\setminus\{r_{n_1}\}$ that is different from any target-node (such a node exists). We obtain our pairwise internally-disjoint paths as before.
\smallskip

\noindent\underline{Case (\emph{b\/})}: Suppose that there is at least $1$ target-block.\smallskip

\noindent W.l.o.g. we may assume that the $a$ target-nodes $t_1,t_2,\ldots,t_a$, where $0\leq a \leq \Delta-1$, lie within the groups $D_1,D_2,\ldots,D_a$ and that the target-blocks are $U_1,U_2\ldots,U_{\Delta-a}$. Suppose that some target-block $U_i$ is adjacent to some root-node $r_j$ of some group $D_j$, where $i\in\{1,2,\ldots,\Delta-a\}$ and $j\in\{a+1,a+2,\ldots,\Delta\}$. Remove the target-block $U_{i}$ (temporarily) from our set of targets and include the new target-node $r_{j}$. Iterate this process. Hence, w.l.o.g. we may assume that: our target-nodes are the original target-nodes $t_1,t_2,\ldots,t_a$ along with the new target-nodes $r_{a+1},r_{a+2},\ldots,r_{a+b}$, where each new target-node $r_{a+i}$ is adjacent to the now removed old target-block $U_i$; and our target-blocks are $U_{b+1},U_{b+2},\ldots,U_{\Delta-a}$ with none of these target-blocks adjacent to any root-node in the groups $D_{a+b+1},D_{a+b+2}\ldots,D_\Delta$. For each $i\in\{1,2,\ldots,\Delta-a\}$: let the node $x_{a+b+i}\in D_{a+b+i}\setminus\{r_{a+b+i}\}$ be adjacent to $U_{b+i}$; and (temporarily) remove the target-block $U_{b+i}$ and include the new target-node $x_{a+b+i}$.

Apply the construction in Case (\emph{a\/}) to our new set of $\Delta$ target-nodes. We obtain $\Delta$ paths, one from $U$ to each of our target-nodes so that these paths are internally-disjoint. Consider some new target-node $r_{a+i}$, where $i\in\{1,2,\ldots,b\}$. By the construction of our paths, the path corresponding to this new target-node is $U,r_{a+i}$ and $r_{a+i}$ does not appear on any other path (there is no repetition of $r_{a+i}$ in our set of target-nodes). Extend the path $U,r_{a+i}$ to the path $U,r_{a+i},U_i$, for $i=1,2,\ldots,b$. Consider some new target-node $x_{a+b+i}$, where $i\in\{1,2,\ldots,\Delta-a\}$. Suppose that the edge $(U_{b+i},x_{a+b+i})$ appears on some path. By the construction of our paths, the only way that this can happen is if this edge is the last edge on the path from $U$ to $x_{a+b+i}$. If this is the case then truncate this path at $U_{b+i}$. Alternatively, if the edge $(U_{b+i},x_{a+b+i})$ does not appear on some path then we extend the path from $U$ to $x_{a+b+i}$ by the addition of the edge to $U_{b+i}$. Consequently, we obtain a set of paths from $U$ to each of our original target-nodes and target-blocks so that these paths are pairwise edge-disjoint. Note that: target-nodes only appear as destinations and apart from possibly target-nodes, no node appears on more than one path; and no block appears on more than one path except possibly for some target-blocks (which might appear as internal nodes on paths). The result follows.\end{proof}

Note that the construction in Theorem~\ref{thm:startpaths} is weaker than those in the previous section as we obtain only that paths are pairwise edge-disjoint rather than pairwise internally-disjoint. However, we do obtain the following result as an immediate corollary of the construction in Theorem~\ref{thm:startpaths}. 

\begin{corollary}
Let $T$ be any $[\Delta,k]$-transversal design where $k,\Delta\geq 2$ and where $\Delta\leq k$. Let $U$ be any block and let $t_1,t_2,\ldots,t_\Delta$ be any $\Delta$ nodes, called target-nodes, where there may be repetitions. For each $i=1,2,\ldots,\Delta$, there is a path $\pi_i$ from $U$ to $t_i$ of length at most $7$, so that the paths $\pi_1,\pi_2,\ldots,\pi_\Delta$ are pairwise internally-disjoint.
\end{corollary}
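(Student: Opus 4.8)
The statement is an immediate consequence of the construction already carried out in the proof of Theorem~\ref{thm:startpaths}, and the plan is simply to extract the stronger conclusion from that construction in the special case at hand. The key observation is that when every target $t_1,t_2,\ldots,t_\Delta$ is a node (and none is a block), the case analysis in the proof of Theorem~\ref{thm:startpaths} never enters Case~(\emph{b\/}): that case is invoked precisely when at least one target-block is present. Hence we are always in Case~(\emph{a\/}), and it suffices to verify that the paths produced there are in fact pairwise internally-disjoint rather than merely pairwise edge-disjoint.

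First I would walk through each sub-case of Case~(\emph{a\/}) and confirm that the conclusion recorded there is internal-disjointness. In Sub-cases~(\emph{ii\/}) and~(\emph{iii\/}) the proof literally states that ``the resulting $\Delta$ paths $\ldots$ are pairwise internally-disjoint'', using that all the $U$-blocks (and, in Sub-case~(\emph{iii\/}), the auxiliary blocks $U'_{r_{n_1},x}$, $U'_{x,t'}$) are distinct and different from $U$ by Lemma~\ref{lem:genblocks}, together with the fact that no two paths share a non-terminal node because distinct root-nodes lie in distinct groups and the interior nodes used (the $x$'s) were chosen to avoid root-nodes and target-nodes. In Sub-case~(\emph{i\/}) one treats the further split according to the number $b$ of unmatched root-nodes ($b=0$; $b=1$ with $\Delta=2$ or $\Delta\ge 3$; $b\ge 2$); in every branch the proof again concludes ``pairwise internally-disjoint'', the distinctness of the $U$-blocks, $U'$-blocks and $\bar U$-blocks being exactly what Lemma~\ref{lem:genblocks} and the explicit re-choices of the $x^\prime_{n_i}$ guarantee. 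So Case~(\emph{a\/}) delivers internally-disjoint paths; the weaker edge-disjointness claimed in Theorem~\ref{thm:startpaths} is only forced in Case~(\emph{b\/}), where a target-block may legitimately occur as an interior vertex on several of the paths.

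For the length bound, nothing new is needed: the bound of $7$ in Theorem~\ref{thm:startpaths} is inherited verbatim, and one can note that it is already attained within Case~(\emph{a\/}) — for instance, the $\Delta=2$, $b=1$ branch of Sub-case~(\emph{i\/}) produces the path $U, r_{n_2}, U'_{r_{n_2},x_{n_1}}, x_{n_1}, U'_{x_{n_1},x_{n_2}}, x_{n_2}, U'_{x_{n_2},r_{n_1}}, t_2$ of length exactly $7$ — so the stated bound is tight and cannot be improved by this argument.

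I do not anticipate a genuine obstacle here; the only care required is bookkeeping, namely checking that \emph{every} branch of Case~(\emph{a\/}) — including the somewhat intricate $b\ge 2$ branch with its iterative re-selection of the nodes $x^\prime_{n_i}$ to keep the $\bar U$-blocks distinct — was in fact shown to yield internally-disjoint (not merely edge-disjoint) paths. Since that is precisely what the proof of Theorem~\ref{thm:startpaths} establishes in Case~(\emph{a\/}), the corollary follows by inspection of that proof.
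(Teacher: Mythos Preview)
Your proposal is correct and matches the paper's own approach: the paper likewise states the result as an immediate corollary of the construction in Theorem~\ref{thm:startpaths}, the point being exactly that Case~(\emph{a\/}) of that proof (the only case entered when every target is a node) always concludes with pairwise internally-disjoint paths. Your additional remark that the length bound of $7$ is already attained in the $b=1$, $\Delta=2$ branch is a nice observation not made explicit in the paper.
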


We now build some many-to-many edge-disjoint paths within some transversal design.

\begin{theorem}\label{thm:endpaths}
Let $T$ be any $[\Delta,k]$-transversal design where $\Delta \leq k$ and $k,\Delta\geq 2$. Let $a + b = \Delta_0\leq \Delta$ where $a,b\geq 0$. Suppose that we are given $a$ nodes, the target-nodes, and $b$ blocks, the target-blocks, so that there might be repetitions amongst the target-nodes and target-blocks. Suppose that $D_0$ is a group of nodes that contains no target-nodes. There exists a set $S$ of $\Delta_0$ distinct nodes of $D_0$ such that there are $\Delta_0$ pairwise internally-disjoint paths, each of length at most $3$, the sources of which are the nodes of $S$ and the destinations of which are all the target-nodes and target-blocks.
\end{theorem}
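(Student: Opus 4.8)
The plan is to route every target from the group $D_0$, leaning throughout on the one structural fact that drives everything: in $T$ any two nodes lying in distinct groups lie in a unique common block, so for $s\in D_0$ and any node $t\notin D_0$ there is a unique block $g(s,t)\ni s,t$, and — crucially — each block meets $D_0$ in exactly one node, so if $s\neq s'$ in $D_0$ then $g(s,t)\neq g(s',t')$ for \emph{every} choice of $t,t'$. Hence once the sources are chosen to be distinct nodes of $D_0$, all the ``middle blocks'' of the paths we build are automatically pairwise distinct, and checking internal-disjointness collapses to a short list of residual incidences that I will discharge by a system-of-distinct-representatives (Hall) argument on $D_0$. The only freedom exploited beyond the incidence structure is that $D_0$ carries no target-node and that we choose $S$ ourselves.

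\emph{Step 1: eliminate the target-blocks.} I group the $b$ target-blocks by which distinct blocks they realise; a distinct target-block $C$ of multiplicity $m$ ($m\le b\le \Delta_0\le\Delta$) is served by at most one ``short'' path $x_0^{C},C$ of length $1$, where $x_0^{C}=C\cap D_0$, and otherwise by length-$3$ paths of the shape $s,\ g(s,u),\ u,\ C$ with $s\neq x_0^{C}$ and $u$ a node of $C$ outside $D_0$ (there are $\Delta-1\ge m-1$ such nodes, one per group $\neq D_0$), exactly as in the reduction of Case~(\emph{b\/}) of Theorem~\ref{thm:startpaths}. I pick these via-nodes pairwise distinct and disjoint from all original target-nodes — possible because a block has $\Delta-1$ non-$D_0$ nodes and two distinct blocks overlap in at most one node — falling back on the short path for a block precisely when its non-$D_0$ nodes would otherwise be exhausted, and I never use the short path for two distinct target-blocks sharing a $D_0$-node, so the forced sources are pairwise distinct. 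After Step~1 every target-block occurrence is either pinned to a forced source $x_0^{C}$, or has become an ``effective target-node'' $u\notin D_0$ to be reached by a length-$2$ path and then extended by the single edge $(u,C)$.

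\emph{Step 2: choose the sources and finish.} We now have $\Delta_0$ demands: the $a$ original target-nodes together with the effective target-nodes from Step~1, each to be met by a length-$2$ path $s,\ g(s,t),\ t$, plus some block demands with forced sources. For a node-demand $t$ a source $s\in D_0$ is \emph{bad} iff $g(s,t)$ is one of the target-blocks; since $s\mapsto g(s,t)$ is a bijection from $D_0$ onto the $k$ blocks through $t$, at most (the number of distinct target-blocks) $\le b$ sources are bad for $t$, and the bad source arising from the trailing edge $(u,C)$, namely $C\cap D_0$ when $t=u$ is a via-node of $C$, is among them. Because $|D_0|=k\ge\Delta\ge\Delta_0$, a Hall/SDR argument assigns the $\Delta_0$ demands to distinct nodes of $D_0$ respecting all forced sources and all bad-source restrictions; distinctness of the chosen sources then makes the middle blocks pairwise distinct and, by bad-source avoidance, different from every target-block, while the via-nodes were already made distinct from one another and from the target-nodes in Step~1. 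A short incidence check — using ``one $D_0$-node per block'' repeatedly, together with the fact that a node contained in a block on a path need not lie on that path — confirms that no element internal to one path lies on another, so the $\Delta_0$ paths are pairwise internally-disjoint and each has length at most $3$; taking $S$ to be the set of chosen sources proves the theorem.

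The main obstacle is the coupling between Steps~1 and~2: deciding which target-blocks take the length-$1$ shortcut and which take a length-$3$ via-path so that, simultaneously, the via-nodes can be chosen distinct and disjoint from the target-nodes, the forced sources are pairwise distinct, and — after deleting forced and bad sources — enough of $D_0$ survives to host all the length-$2$ paths, with only the slack $k-\Delta_0\ge 0$ to play with. This forces a short case analysis in the tight configurations: a target-block whose multiplicity is close to $\Delta$; several target-nodes packed inside one target-block, so that that block's non-$D_0$ nodes are exhausted and one must use its length-$1$ path; and $k=\Delta$, where $D_0$ must be used in its entirety and there is no slack. Each such case is finite and is dispatched by the same ``one node of $D_0$ per block'' principle, after which the remainder of the argument is mechanical.
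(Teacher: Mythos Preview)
Your plan is built on the right structural observation—distinct sources in $D_0$ force distinct ``middle'' blocks, and a middle block can coincide with a target-block only when the source is that target-block's $D_0$-neighbour—and this is precisely the engine of the paper's proof too. Where you diverge is in the execution, and the paper's is considerably cleaner on two counts.

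First, the paper does \emph{not} decide case-by-case which target-blocks get the length-$1$ shortcut. It groups the distinct target-blocks by the equivalence relation ``same $D_0$-neighbour'', takes representatives $U_1,\ldots,U_d$, and gives \emph{exactly one} short path $x_1^i,U_i$ to each class. Second, for every remaining target-block occurrence the paper does not search inside the block for a via-node avoiding target-nodes; instead it simply \emph{dedicates a fresh group} $D_j^i$ containing no target-nodes (there are enough: one needs $b-d+1$ such groups including $D_0$, and at least $\Delta-a\geq b$ groups are target-free) and takes the via-node to be the unique node of $D_j^i$ adjacent to that target-block. Via-nodes are then automatically pairwise distinct and disjoint from all target-nodes, with no case analysis. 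For the $a$ target-nodes one then picks any $a$ further fresh nodes of $D_0$ and uses the length-$2$ path through the generated block. Because every source used lies outside $\{x_1^1,\ldots,x_1^d\}$, every generated middle block is automatically different from every target-block; no Hall/SDR step is needed, and the total count of sources is exactly $\Delta_0\leq k$.

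There is also a genuine gap in your Step~2: the sentence ``Because $|D_0|=k\geq\Delta\geq\Delta_0$, a Hall/SDR argument assigns the $\Delta_0$ demands to distinct nodes of $D_0$'' does not verify Hall's condition. The bad sources for every node-demand lie in the set of $D_0$-neighbours of the distinct target-blocks (a set of size $d$), so the condition reduces to $k-d\geq \Delta_0-f$ where $f$ is the number of short paths you actually take. Your ``fall back only when exhausted'' policy does not guarantee $f=d$; with $f<d$ and $k=\Delta=\Delta_0$ this inequality can fail. The fix is exactly the paper's choice $f=d$ (one short path per equivalence class), after which the forced sources coincide with the globally bad sources, every node-demand can use every remaining node of $D_0$, and a plain greedy assignment—not Hall—finishes the job. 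Your tight-configuration case analysis then evaporates.
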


\begin{proof}
Suppose that $b\geq 1$ (we'll deal with the case when $b=0$ later)  and suppose that the distinct target-blocks are $U_1,U_2,\ldots,U_c$, so that the target-blocks $U_{c+1},U_{c+2},\ldots,U_b$ all lie in $\{U_i:i=1,2,\ldots,c\}$. Furthermore, suppose that for each $i\in\{1,2,\ldots,c\}$,  the target-block $U_i$ is repeated $n_i$ times in the set of target-blocks. So, $b=\sum\limits_{i=1}^c n_i$. We define that $U_i\equiv U_j$, for $i,j\in\{1,2,\ldots,c\}$ if, and only if, $U_i$ and $U_j$ are adjacent to the same node of $D_0$. Let $U_1,U_2,\ldots,U_d$ (where $d\geq 1$) be representatives from the resulting equivalence classes (so, $d\leq c$) and let $x_1^i$ be the node of $D_0$ adjacent to $U_i$, for $i=1,2,\ldots,d$. Thus, we immediately obtain $d$ paths $\pi_1^1,\pi_1^2,\ldots,\pi_1^d$ of length $1$ from distinct nodes of $D_0$ to the target-blocks $U_1,U_2,\ldots,U_d$.

For ease of notation, we rename some of the groups of nodes of $T$ as $\{D_0\}\cup\{D_j^i:i=1,2,\ldots,d; j=2,3,\ldots,n_i\}\cup\{D_j^i:i=d+1,d+2,\ldots,c; j=1,2,\ldots,n_i\}$ so that no target-node lies in any of these groups (note that the number of such groups is $(\sum\limits_{i=1}^cn_i)-d+1 \leq b=\Delta_0-a$ and so this is possible). For each $i\in\{1,2,\ldots,d\}$ and each $j\in\{2,3,\ldots,n_i\}$, choose $x^i_j\in D_0\setminus\{x_1^1,x_1^2,\ldots,x_1^d\}$, and for each $i\in\{d+1,d+2,\ldots,c\}$ and each $j\in\{1,2,\ldots,n_i\}$, choose $x^i_j\in D_0\setminus\{x_1^1,x_1^2,\ldots,x_1^d\}$, so that all chosen nodes are distinct. For each $i\in\{1,2,\ldots,d\}$ and each $j\in\{2,3,\ldots,n_i\}$, let $r_j^i\in D_j^i$ be the unique node adjacent to $U_i$, and for each $i\in\{d+1,d+2,\ldots,c\}$ and each $j\in\{1,2,\ldots,n_i\}$, let $r_j^i\in D_j^i$ be the unique node adjacent to $U_i$.

For each $i\in\{1,2,\ldots,d\}$ and each $j\in\{2,3,\ldots,n_i\}$, let $U_j^i$ be the block generated by $x_j^i$ and $r_j^i$, and for each $i\in\{d+1,d+2,\ldots,c\}$ and each $j\in\{1,2,\ldots,n_i\}$, let $U_j^i$ be the block generated by $x_j^i$ and $r_j^i$. Call the resulting blocks generated the $U$-blocks. In particular, as every $U$-block is adjacent to a different node of $D_0$, all $U$-blocks are distinct. Moreover, as no target-block is adjacent to the same node of $D_0$ that any $U$-block is adjacent to, every $U$-block is different from every target-block. For each $i\in\{1,2,\ldots,d\}$ and each $j\in\{2,3,\ldots,n_i\}$, define the path $\pi_j^i$ as $x_j^i,U_j^i,r_j^i,U_i$, and for each $i\in\{d+1,d+2,\ldots,c\}$ and each $j\in\{1,2,\ldots,n_i\}$, define the path $\pi_j^i$ as $x_j^i,U_j^i,r_j^i,U_i$. The paths from the set $\{\pi_j^i:i=1,2,\ldots,c; j=1,2,\ldots,n_i\}$ are clearly internally-disjoint.

Write $n_0=a$ and suppose that the target-nodes are $t_1,t_2,\ldots,t_{n_0}$. Let $x_1^0,x_2^0,\ldots,x_{n_0}^0$ be distinct nodes of $D_0\setminus\{x_j^i:i=1,2,\ldots,c; j=1,2,\ldots,n_i\}$. For each $j\in\{1,2,\ldots,n_0\}$, let $U^0_j$ be the block generated by $x_j^0$ and $t_j$. As above, all such blocks are distinct and different from any block generated so far. For each $j\in\{1,2,\ldots,n_0\}$, define the path $\pi_j^0$ as $x_j^0,U_j^0,t_j$. The resulting set of paths $\{\pi_j^i:i=0,1,\ldots,c; j=1,2,\ldots,n_i\}$ is as required.

Alternatively, if $b=0$ then we dispense with the above construction of paths to target-blocks and proceed identically as regards the target-nodes. The result clearly follows. \end{proof}

We are now in a position to use Theorems~\ref{thm:startpaths}~and~\ref{thm:endpaths} to obtain the main result of this section.

\begin{theorem}\label{thm:mainonetomany}
Let $k, \Delta, d\geq 2$ so that $\Delta\leq k$. Let $H$ be built by the $2$-step method applied to the connected $(d,\Delta)$-bipartite graph $H_0$ using the $[\Delta,k]$-transversal design $T$. Let $B$ be some block of $H$ and let $B_1,B_2,\ldots,B_\Delta$ be blocks of $H$ that are not necessarily distinct but different from $B$. There exists paths from $B$ to $B_1,B_2,\ldots,B_\Delta$ so that no edge of $H$ appears in more than one of these paths. 
\end{theorem}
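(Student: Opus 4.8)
The plan is to push the $\Delta$ required paths along a routing tree living inside $H_0$, delegating the work at each copy of $T$ inside $H$ to Theorem~\ref{thm:startpaths} (at the copy holding the source) and Theorem~\ref{thm:endpaths} (at all the others). Using the indexing of Section~\ref{subsec:3step}, write the source as $B_{Q,V}$ and the targets as $B_{Q_1,V_1},B_{Q_2,V_2},\ldots,B_{Q_\Delta,V_\Delta}$, so that $Q$ and the $Q_i$ are blocks of $H_0$. As $H_0$ is connected, the union of a choice of $H_0$-path from $Q$ to each $Q_i$ is connected, so we may fix a spanning tree $\mathcal{S}$ of that union, rooted at $Q$; for each $i$ there is then a unique $Q$-to-$Q_i$ path inside $\mathcal{S}$. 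For a block $Q'$ of $H_0$ on $\mathcal{S}$ write $m(Q')$ for the number of indices $i$ whose $\mathcal{S}$-path passes through $Q'$; thus $m(Q)=\Delta$ and always $m(Q')\le\Delta\le k$. Recall that in $H$ the copy $T_{Q'}$ is rooted on the groups $G_p$ with $p$ an $H_0$-neighbour of $Q'$; for $Q'\neq Q$ we single out the \emph{parent group} $G_{p(Q')}$ incident to the $\mathcal{S}$-edge joining $Q'$ to its parent, the remaining roots being \emph{child groups} incident to $\mathcal{S}$-edges going downwards. The parent and child groups of $T_{Q'}$ are pairwise distinct because $Q'$ has distinct $H_0$-neighbours, and, crucially, distinct copies of $T$ in $H$ have disjoint block sets and disjoint edge sets.

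I would process $\mathcal{S}$ from its leaves up to $Q$, maintaining the inductive statement that once a block $Q'\neq Q$ has been treated it supplies a set $S_{Q'}$ of $m(Q')$ distinct nodes of $G_{p(Q')}$ together with an already-built sub-routing, internal to $T_{Q'}$ and the copies below it, reaching from the nodes of $S_{Q'}$ exactly those targets $B_i$ whose $\mathcal{S}$-path runs at or below $Q'$. To treat $Q'$: for each child $Q''$ of $Q'$ we already have $S_{Q''}$ sitting in the child group between $Q'$ and $Q''$, and we apply Theorem~\ref{thm:endpaths} inside $T_{Q'}$ taking $D_0=G_{p(Q')}$, taking as target-blocks the multiset $\{B_{Q',V_i}:Q_i=Q'\}$ and as target-nodes the union of the $S_{Q''}$ over the children of $Q'$; this is legitimate since $D_0$ contains no target-node (target-nodes sit in child groups) and the number of targets is $m(Q')\le\Delta\le k$. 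The output is $S_{Q'}$ and $m(Q')$ pairwise internally-disjoint paths of length at most $3$ inside $T_{Q'}$, which we splice onto the already-built pieces at the $S_{Q''}$. Finally, at $Q$ itself I invoke Theorem~\ref{thm:startpaths} inside $T_Q$ with source block $B_{Q,V}$ and with the $\Delta$ targets being $\{B_{Q,V_i}:Q_i=Q\}$ together with the union of the $S_{Q''}$ over the children of $Q$ (a valid mixed family of exactly $\Delta$ nodes and blocks). Reading off, for each $i$, the concatenation of the successive pieces along the $\mathcal{S}$-branch to $Q_i$ --- these fit together precisely because the node set delivered by one invocation is the prescribed source set $S_{Q''}$ of the next --- yields a walk $W_i$ from $B_{Q,V}$ to $B_{Q_i,V_i}$, and every target is reached.

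These walks are pairwise edge-disjoint. Inside a fixed copy $T_{Q'}$ all the pieces used there come from a single application of Theorem~\ref{thm:startpaths} or Theorem~\ref{thm:endpaths}, hence are pairwise internally-disjoint and so edge-disjoint; and every edge of $H$ is an incidence $(a,B')$ whose block $B'$ lies in exactly one copy of $T$, so pieces in different copies share no edge. Since each $W_i$ enters any copy of $T$ at most once (each block of $H_0$ occurs at most once on an $\mathcal{S}$-path), no edge recurs even within a single $W_i$, and edge-disjointness across the family follows. A walk need not be a path in our strict sense, but any walk from $B_{Q,V}$ to $B_{Q_i,V_i}$ contains a path from $B_{Q,V}$ to $B_{Q_i,V_i}$ built from a sub-multiset of its edges; replacing each $W_i$ by such a path preserves pairwise edge-disjointness and proves Theorem~\ref{thm:mainonetomany}. (A link-disjoint routing in the associated DCN, from the server-nodes at one level-$1$ switch-node to any $\Delta$-sized multiset of server-nodes, then follows just as in the one-to-one case.)

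The step I expect to be the main obstacle is matching up the interfaces between adjacent copies of $T$: Theorem~\ref{thm:endpaths} only guarantees \emph{some} suitable source set $S$ in $D_0$, not a prescribed one, and its internal pairing of sources to targets is likewise not at our disposal. Working up the routing tree from the leaves removes this difficulty, because the set $S_{Q''}$ returned at a child can be fed, as free-choice target-nodes, into the application of Theorem~\ref{thm:startpaths} or Theorem~\ref{thm:endpaths} at the parent; and since we only need \emph{a} path from $B_{Q,V}$ to each $B_{Q_i,V_i}$, not any particular pairing, the permutations of ``which branch ends at which target'' introduced by those applications do no harm. The remaining care is purely bookkeeping --- keeping the parent group of each $T_{Q'}$ disjoint from its child groups so that $D_0$ genuinely contains no target-node, and tracking multiplicities when several of the $B_i$ coincide.
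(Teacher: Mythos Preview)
Your proposal is correct and follows essentially the same approach as the paper: build a rooted tree in $H_0$ containing the source block and all target blocks, process it from the leaves upward applying Theorem~\ref{thm:endpaths} at each non-root block (with $D_0$ the parent group) and Theorem~\ref{thm:startpaths} at the root, then concatenate and use the disjointness of edge sets across different copies of $T$. Your write-up is in fact slightly more careful than the paper's in two places: you explicitly note that the concatenations are \emph{a priori} only walks and must be pruned to paths, and you articulate why the leaves-up order is forced (Theorem~\ref{thm:endpaths} chooses $S$ rather than accepting a prescribed one).
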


\begin{proof}
Let $Q_1,Q_2,\ldots,Q_q$ be the exact distinct blocks of $H_0$ such that $\cup_{i=1}^qT_{Q_i}$ contains the blocks $B_1,B_2,\ldots,B_\Delta$ within $H$ (in particular, $q\leq \Delta$), and let $Q_0$ be the block of $H_0$ such that $T_{Q_0}$ contains the block $B$ within $H$. Let $Z$ be a tree within $H_0$ that is rooted at $Q_0$ and is such that: every block of $\{Q_i:i=1,2,\ldots,q\}$ appears in $Z$; and all leaves of $Z$ are blocks within $\{Q_i:i=1,2,\ldots,q\}$. We use the tree $Z$ as a skeleton so as to build our required paths in $H$.

Call the blocks $B_1,B_2,\ldots,B_\Delta$ the \emph{$H$-target-blocks\/}. Label every node $p$ (resp. block $Q$) in $Z$ with a non-negative integer $\mu(p)$ (resp. $\mu(Q)$) detailing the number of $H$-target-blocks that are associated with a block of $Z$ that is a descendant of $p$ (resp. a descendent of $Q$ or with $Q$ itself). So, for example, the root $Q_0$ is such that $\mu(Q_0)=\Delta$ and any leaf (block) $Q$ of $Z$ is such that $\mu(Q)$ is the number of $H$-target-blocks within $T_Q$.

Suppose that $p$ is some node of $Z$ whose children are all leaves (and so blocks). Suppose that w.l.o.g. these children are $Q_1,Q_2,\ldots,Q_r$. For each $i\in\{1,2,\ldots,r\}$, by Theorem~\ref{thm:endpaths}, there exists a set $S_i$ of $\mu(Q_i)$ nodes of the group of nodes of $H$ associated with the node $p$ of $H_0$ so that there are $\mu(Q_i)$ pairwise internally-disjoint paths from the nodes of $S_i$ to the $H$-target-blocks associated with $Q_i$ where each of these paths has length at most $3$ (note that the edges of these paths lie in $T_{Q_i}$; of course, the edges of $T_{Q_i}$ are disjoint from the edges of $T_{Q_j}$, for any $i\neq j$). Consequently, we obtain a multi-set $S_p=\cup_{i=1}^r S_i$ of $\mu(p)$ nodes in the group of nodes in $H$ associated with the node $p$ of $H_0$ so that there are $\mu(p)$ paths in $\cup_{i=1}^rT_{Q_i}$ from the nodes of $S_p$ to the $H$-target-blocks associated with the blocks $Q_1,Q_2,\ldots,Q_r$. These $\mu(p)$ paths are pairwise internally-disjoint but they might have common sources.

Suppose that $Q$ is some non-root block of $Z$ whose children are w.l.o.g.  $p_1,p_2,\ldots,p_r$ and so that the following holds:
\begin{itemize}
\item associated with each child $p_i$ is a multi-set $S_i$ of $\mu(p_i)$ nodes in the group of nodes of $H$ associated with the node $p_i$ of $H_0$
\item for each child $p_i$, there are $\mu(p_i)$ paths from the nodes of $S_i$ to the $H$-target-blocks associated with blocks that are descendants of $p_i$ in $T$ so that all of these paths have length at most $l$
\item no edge of $H$ appears in more than one of the $\sum\limits_{i=1}^r\mu(p_i)$ paths that are associated with some child of $Q$.
\end{itemize}
Let $p_0$ be the node of $Z$ that is the parent of $Q$. By Theorem~\ref{thm:endpaths}, there is a set $S_0$ of $\mu(p_0)$ nodes in the group of nodes of $H$ associated with $p_0$ together with $\mu(p_0)$ paths from the nodes of $S_0$ to the nodes of $\cup_{i=1}^rS_i$ in union with the $H$-target-blocks associated with $Q$ where the paths are pairwise internally-disjoint and each path has length at most $3$. Hence, by concatenating the paths involved, we have $\mu(p_0)=\mu(Q)$ paths from the nodes of $S_0$ to the $H$-target-blocks associated with all descendant blocks of $p_0$ in $Z$ where no edge of $H$ appears in more than one of these paths and the length of any of these paths is at most $l+3$.

Finally, suppose that the children of $Q_0$ in $Z$ are w.l.o.g. $p_1,p_2,\ldots,p_r$ and are such that the following holds:
\begin{itemize}
\item associated with each child $p_i$ is a multi-set $S_i$ of $\mu(p_i)$ nodes in the group of nodes of $H$ associated with the node $p_i$ of $H_0$
\item for each child $p_i$, there are $\mu(p_i)$ paths from the nodes of $S_i$ to the $H$-target-blocks associated with blocks that are descendants of $p_i$ in $T$ so that all of these paths have length at most $l$
\item no edge of $H$ appears in more than one of the $\sum\limits_{i=1}^r\mu(p_i)$ paths that are associated with some child of $Q_0$.
\end{itemize}
By Theorem~\ref{thm:startpaths}, we obtain paths from $B$ to the nodes of $\cup_{i=1}^r S_i$ in union with the multi-set of blocks associated with $Q_0$ so that no edge of $H$ appears in more than one of these paths and all paths have length at most $7$. Consequently, by concatenating paths, we obtain $\Delta$ paths from $B$ to $B_1,B_2,\ldots,B_\Delta$ so that no edge of $H$ appears in more than one of these paths and the paths have length at most $l+7$. The result follows by induction. Moreover, it is easy to see that if the depth of $Z$ is $h$ then the length of the longest of these paths is at most $3\frac{h}{2}+7$.
\end{proof}

We have two remarks as regards Theorem~\ref{thm:mainonetomany}: first, note the additional bound of $3\frac{h}{2}+7$ on the lengths of the paths derived in the proof of Theorem~\ref{thm:mainonetomany} in terms of the height $h$ of the tree $Z$; and, second, this theorem is weaker than Theorem~\ref{thm:main} in that in Theorem~\ref{thm:mainonetomany} the paths constructed are pairwise edge-disjoint rather than pairwise internally-disjoint as they are in Theorem~\ref{thm:main}.

Of course, armed with the constructions of switch-centric DCNs from Section~\ref{sec:composition}, it should be clear how we can obtain pairwise edge-disjoint paths joining all the server-nodes adjacent to some level-$1$ switch-node in some appropriately constructed DCN to any identically-sized set of distinct server-nodes (irrespective of whether they are adjacent to different level-$1$ switch-nodes), so long as the number of server-nodes adjacent to some level-$1$ switch-node is no more than the number of level-$2$ switch-nodes adjacent to it.

\section{Conclusion}

In this paper, we have shown how combinatorial design theory can be used to build switch-centric DCNs of diameter at most $6$ and with many more server-nodes than the Fat-Tree DCN but so that there is still considerable one-to-one and one-to-many path diversity. We regard the more general demonstration that combinatorial mathematics can enhance the design of modern-day computational infrastructures such as data centres as one of the primary contributions of this paper. Whilst we have demonstrated that combinatorial mathematics has the potential to add to and improve the design of DCNs, the DCNs obtained by our constructions need to be studied in much greater detail with regard to the numerous other properties that a switch-centric DCN has to have in order to make it viable as a practical DCN. For example: although we bound the diameter of our DCNs, we need to derive (optimal) routing algorithms (within bipartite graphs built using the $2$-step method) so as to meet these bounds; and (as was noted in \cite{QFZ15}, it would be beneficial if the bisection width of the DCNs constructed in this paper could be ascertained (bisection width is often used as a proxy for throughput in DCNs).

Our results also throw up some directions for further research and we mention three such directions now.

It would be interesting to discover more mechanisms for converting bipartite graphs constructed using the $2$-step method into DCNs than those developed in \cite{QFZ15} and detailed in Section~\ref{sec:composition}. We envisage that such a study would go hand-in-hand with research into building DCNs which possess yet more beneficial properties as regards their efficacy as DCNs (as highlighted above).

As we mention in Section~\ref{sec:applyingcons}, our constructions have drawn attention to a hitherto unstudied problem in combinatorics namely the construction of regular, uniform bipartite graphs with additional properties such as having at least $2$ internally-disjoint paths joining any two blocks. It would be interesting to study problems such as this in a solely mathematical context.

Our results have hinted that the study of transversal designs as bipartite graphs and in a graph-theoretic context is worth pursuing. For example, if one looks at Theorem~\ref{thm:main} then there are $\Delta$ pairwise internally-disjoint paths, each of length at most $6$, joining any two distinct blocks in some transversal design $T$; and if one looks at Theorem~\ref{thm:startpaths} then, if $\Delta\leq k$, there are $\Delta$ pairwise edge-disjoint paths, each of length at most $6$, joining any given source block with any given multi-set of $\Delta$ target blocks in some transversal design $T$. Such results might be of independent interest within some appropriate network context. Within a DCN $N$ built using the $2$-step method, there are many `copies' of the bipartite graph corresponding to the chosen transversal design. These copies and the above constructions might be utilized where there is traffic localization, e.g., in a virtualization context where many guest DCNs are embedded within the DCN $N$.


\end{document}